\documentclass{article}
\usepackage{arxiv}

\usepackage[utf8]{inputenc} 
\usepackage[T1]{fontenc}    
\usepackage{hyperref}       
\usepackage{url}            
\usepackage{booktabs}       
\usepackage{amsfonts}       
\usepackage{nicefrac}       
\usepackage{microtype}      
\usepackage{lipsum}		    
\usepackage{graphicx}
\usepackage{natbib}
\usepackage{doi}

\usepackage{appendix}

\RequirePackage{amsthm,amsmath,amsfonts,amssymb}
\RequirePackage{natbib}

\RequirePackage{bbold}

\usepackage{comment}
\usepackage[inline]{enumitem}

\theoremstyle{plain}

\newtheorem{theorem}{Theorem}[section]

\newtheorem{algo}{Algorithm}[section]
\theoremstyle{definition}

\newtheorem{assumption}{Assumption}
\theoremstyle{remark}

\newcommand{\expit}{{\text{expit}}}

\title{Posterior Uncertainty for Targeted Parameters in Bayesian Bootstrap Procedures}

\renewcommand{\shorttitle}{Bootstrap Methods for Posterior Uncertainty}

\author{ \hspace{1mm}Magid Sabbagh\\
	Department of Mathematics and Statistics\\
	McGill University\\
	Montreal, QC, Canada \\
	\texttt{magid.sabbagh@mail.mcgill.ca} \\
	\And
	\href{https://orcid.org/0000-0001-9811-7140}{\includegraphics[scale=0.06]{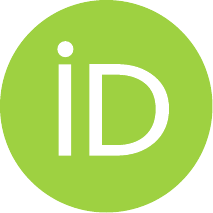}\hspace{1mm}David A. Stephens}\thanks{Corresponding author} \\
	Department of Mathematics and Statistics\\
	McGill University\\
	Montreal, QC, Canada \\
	\texttt{david.stephens@mcgill.ca} \\
}

\begin{document}

\maketitle

\begin{abstract}
We propose a general method to carry out a valid Bayesian analysis of a finite-dimensional `targeted' parameter in the presence of a finite-dimensional nuisance parameter. We apply our methods to causal inference based on estimating equations. While much of the literature in Bayesian causal inference has relied on the conventional 'likelihood times prior' framework, a recently proposed method, the 'Linked Bayesian Bootstrap', deviated from this classical setting to obtain valid Bayesian inference using the Dirichlet process and the Bayesian bootstrap. These methods rely on an adjustment based on the propensity score and explain how to handle the uncertainty concerning it when studying the posterior distribution of a treatment effect. We examine theoretically the asymptotic properties of the posterior distribution obtained and show that our proposed method, a generalized version of the 'Linked Bayesian Bootstrap', enjoys desirable frequentist properties. In addition, we show that the credible intervals have asymptotically the correct coverage properties. We discuss the applications of our method to mis-specified and singly-robust models in causal inference.
\end{abstract}

\keywords{Causal Inference; Dirichlet Process; Bayesian Bootstrap; Linked Bayesian Bootstrap; Propensity Score}

\section{Introduction}
In an inference problem, focus is typically placed on a quantity that has some meaningful interpretation in terms the observable quantities.  In the classical Bayesian representation, such quantities are typically the parameters defined by de Finetti's representation.  However, we may take a more general approach to `targeting' parameters by defining them as functionals of the data generation mechanism, arrived at, for example, as the solution to a loss minimization problem. This more general kind of definition of a targeted parameter does not always lend itself to conventional Bayesian inference as such parameters might not appear in the likelihood in a straightforward manner.  It is usually the case that such targeted parameters do not specify the data generating process fully.  We adopt the term `targeted parameter' rather than the more common `parameter of interest' as we feel that it better reflects the inference objective.

Targeted parameters also arise in situations where we are conscious of model mis-specification. For example, in clinical studies we might target a parameter which quantifies the effect of a treatment on an outcome. In this causal inference setting, the key role played by the propensity score in observational studies is highlighted in \citet{Rosenbaum/Rubin:1983}. Adjustment based on the propensity score can be carried out using weighting, stratification, matching or regression. While the use of a fitted propensity score in the frequentist literature is common in the study of the causal effect, it is not universally accepted that fully Bayesian inference using this plug-in is possible. \citet{Stephens/etal:2022} address the issue of incorporating a parametrically modeled propensity score in a Bayesian causal analysis based on the Dirichlet process in a possible model mis-specification. While there was no consensus that two-step methods based on the fitted values of the propensity score can be viewed as fully Bayesian methods, the computational method they introduce provides a valid Bayesian inference on a causal targeted parameter. The aim of this paper is to give a theoretical justification of this computational method, and show that indeed, the resulting posterior exhibits coverage at the nominal level asymptotically.  We adopt a general framework that uses ideas from empirical processes, the Bayesian bootstrap, and estimating equations. 

In sections \ref{Background}, we outline a bootstrap method derived from applications of the Dirichlet process and the Bayesian bootstrap.  In section \ref{ExtensiontoEstimatingEquations} we show how targeting parameters via estimating equations, and use these notions to formulate the problem of obtaining fully valid Bayesian inference of a targeted parameter in presence of an estimated nuisance parameter..  In section \ref{Asymptotics} we derive the main theoretical results of this paper.  We study the effect of handling the inferential uncertainty concerning the nuisance parameter on the posterior of the targeted parameter, and theoretically demonstrate the validity of a recently proposed method, the Linked Bayesian Bootstrap. We move forward by giving an argument validating the use of two-steps procedures based on the propensity score in causal structural models as Bayesian methods. In section \ref{Simulations}, we empirically validate the results we have previously obtained, performing analysis on treatment effects using G-estimation and inverse weighting estimators.


\section{Bootstrap Methods and Loss Minimization}{\label{Background}}

In the following we adopt the following notation: let $O_1,\ldots,O_n$ be i.i.d random variables defined on a probability space $(\mathcal{O},B)$ with distribution $F_0=P_O$. Denote and define the empirical process $\mathbb{P}_n$ and weighted empirical process $\mathbb{P}_n^w$ as
\[
\mathbb{P}_n=\frac{1}{n}\sum\limits_{i=1}^n \delta_{O_i} \qquad \text{and} \qquad \mathbb{P}_n^w=\sum\limits_{i=1}^n w_{in}\delta_{O_i}
\]
respectively, where $\{w_{in}, i=1,\ldots,n\}$ is a series of (possibly stochastic) weights summing to one.   In this paper, we focus exclusively on Dirichlet weights.


\subsection{The Loss-Likelihood Bootstrap}{\label{Loss-LikelihoodBootstrap}}

One appealing approach to targeting parameters uses loss minimization.  Suppose the true value $\theta_0 \in \Theta \subset \mathbb{R}^p$ of a parameter $\theta$ minimizes the following expected loss, for some loss function $l :  \mathcal{O} \times \Theta \to [0,\infty)$ :
$$
\theta_0  \equiv  \theta(F_O)= \text{argmin}_{\theta} \int_{\mathcal{O}} l(o;\theta)dF_0(o) =\text{argmin}_{\theta}\text{ }\mathbb{E}_{P_0} l(O;\theta).
$$
If measure $P_O$ with corresponding distribution function $F_O$ is unknown but is known to belong to a set $\mathcal{F}$, any prior $P_\mathcal{F}$ on $\mathcal{F}$ induces a prior on $P_\Theta$ on  $\Theta$, by first considering, for $F \in \mathcal{F},$
$$
\theta(F)= \text{argmin}_{t} \int_{\mathcal{X}} l(o;t)dF(o)
$$
and then defining $P_\Theta (\theta \in A) = P_\mathcal{F} \left\{F: \theta(F) \in A \right\}$.  Under mild conditions on $l$ and prior $P_{\mathcal{F}}$, this prior places sufficient mass in neighbourhoods of $\theta_0$.

\citet{Lyddon/etal:2019} use the loss minimization approach to targeting in order to obtain a sample from the posterior distribution of $\theta(P_O)$ using a Bayesian non-parametric computational strategy. Let $\alpha$ be a finite Borel measure on $(\mathcal{O},B)$. If a Dirichlet process prior DP$(\alpha)$, see \citet{Ferguson:1973}, is placed on $F$, then the posterior distribution of $F$ given $O_1,\ldots,O_n$ will follow a Dirichlet process  DP$(\alpha+n\mathbb{P}_n)$.  The weak limit DP($n\mathbb{P}_n)$, which can represented by the empirical process $\mathbb{P}_n^w$ with $(w_{1n},\ldots,w_{nn}) \sim \text{Dir}(1,\ldots,1)$, of the posterior as $\alpha(\mathcal{O}) \to 0$ is known as the Bayesian bootstrap distribution; see \citet{Rubin:1981}. It is also worth mentioning that the posterior distribution DP$(\alpha + n \mathbb{P}_n)$ can be written as $V_n Q +(1-V_n) \mathbb{P}_n^w$, where $V_n \sim \text{Beta}(\vert \alpha \vert,n)$ and $Q \sim \text{DP}(\alpha)$  with  $\vert \alpha \vert =\alpha(\mathcal{O})$ and $V_n,Q$ and $w$ are independent, see \citet{Ghosal/VanderVaart:2017}. When $n$ is large, it emerges that the $\mathbb{P}_n^w$ is the dominating term and this fact lies at the heart of the non-parametric Bernstein-von Mises theorem. Combining these facts about Dirichlet processes leads to approximating the posterior of $\theta(F)$ by 
$$
\text{argmin}_\theta \int_{\mathcal{X}} l(O;\theta) \ d\mathbb{P}_n^w = \text{argmin}_\theta \sum\limits_{i=1}^n w_{in}l(O_i;\theta).
$$  

\noindent Posterior computation can be achieved using the following algorithm:
\begin{algo}{Loss-Likelihood Bootstrap \citep{Lyddon/etal:2019}\label{algoLLB}}
\begin{enumerate}
    \item For $j \in \{1,2,\ldots,B\}$
\begin{enumerate}
    \item Draw random weights $(w^{(j)}_{1n},\ldots,w^{(j)}_{nn}) \sim \text{Dir}(1,\ldots,1)$.
    \item Compute $\displaystyle{ \theta^{(j)}={arg\,min}_\theta \sum\limits_{i=1}^n w_{in}^{(j)}  l(O_i;\theta)}$.
\end{enumerate}
    \item Output $\theta^{(1)}.\ldots,\theta^{(B)}$.
\end{enumerate}

\end{algo}
The following theorem of \citet{Lyddon/etal:2019} generalizes the result of \citet{Newton/Raftery:1994} and concerns the conditional (on the data) asymptotic distribution of a bootstrap sample for the minimizer of $\mathbb{E}_{P_O} l(O;\theta)$, $\theta_0$, where $O$ has distribution $P_O$. We define the necessary assumptions on the relevant probability spaces in Assumption \ref{ass0} (see the supplementary material). In the remaining parts of this paper, weights $(w_{1n},\ldots,w_{nn})$ follow a $\text{Dirichlet}(1,\ldots,1)$ distribution, denoted by $P_W$ and are generated to be independent of the data. The joint probability measure is hence $P_{OW}=P_{O}^\infty \times P_W$. We write $P_O$ in lieu of $P_O^{\infty}$ for simplicity when necessary.  
\begin{theorem}{\label{thmLLB}}
   Let $\hat{\theta}_n$ and $\hat{\theta}_{n,BB}$ be the minimizers of
   \[
   \frac{1}{n} \sum\limits_{i=1}^n l(O_i;\theta)  \qquad \text{and} \qquad \sum\limits_{i=1}^n w_{in} l(O_i;\theta),
   \]
   respectively, where $(w_{1n},\ldots,w_{nn})$ is a Dirichlet$(1,\ldots,1)$ random vector. Then, under regularity conditions, the quantities
   \[
\sqrt{n} ( \hat{\theta}_n-\theta_0 ) \qquad \text{and} \qquad 
   \sqrt{n}  (\hat{\theta}_{n,BB}-\hat{\theta}_n )\mid O_1,\ldots O_n
   \]
   converge in distribution as $n \to \infty$ to a random variable with distribution
   \[
   \mathcal{N}\left(0, J(\theta_0)^{-1} I(\theta_0) J\left(\theta_0\right)^{-1}\right)
   \]
   almost surely $P_O^\infty$, where
   \[
   J(\theta_0)= \mathbb{E}_{P_O}\left\{ \frac{\partial^2l(O;\theta_0)}{\partial \theta \partial \theta^\top}\right\} \quad and  \quad I(\theta_0)=  \mathbb{E}_{P_{O}} \left\{  \frac{\partial l(O;\theta_0)}{\partial \theta}\frac{\partial l(O;\theta_0)}{\partial \theta^\top}\right\}.
   \]
   where
   \[
    \frac{\partial l(o;\theta_0)}{\partial \theta} =  \left.\frac{\partial l(o;\theta)}{\partial \theta} \right |_{\theta = \theta_0}
   \]
   and so on.
\end{theorem}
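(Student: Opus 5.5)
The plan is to treat both statements as M-estimation results and obtain them from a first-order Taylor expansion of the estimating equation $\partial l/\partial\theta$ around $\theta_0$ (respectively $\hat\theta_n$). The regularity conditions I will assume are: $l(o;\cdot)$ twice continuously differentiable; $\theta_0$ is the unique, well-separated minimizer; $J(\theta_0)$ is nonsingular and $I(\theta_0)$ is finite; and the second derivatives are dominated in a neighbourhood of $\theta_0$ by an integrable envelope, so a uniform strong law holds there. Write $\dot l=\partial l/\partial\theta$ and $\ddot l$ for the Hessian.

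\textbf{Frequentist part.} First, consistency $\hat\theta_n\to\theta_0$ almost surely follows from a uniform strong law for $\mathbb{P}_n l(\cdot;\theta)$ on compacts, together with well-separatedness; convexity of $l$ would simplify this. Next, expand $0=\mathbb{P}_n\dot l(\cdot;\hat\theta_n)=\mathbb{P}_n\dot l(\cdot;\theta_0)+\mathbb{P}_n\ddot l(\cdot;\tilde\theta_n)(\hat\theta_n-\theta_0)$. The envelope condition gives $\mathbb{P}_n\ddot l(\cdot;\tilde\theta_n)\to J(\theta_0)$ almost surely. This yields
\[
\sqrt n(\hat\theta_n-\theta_0)=-J(\theta_0)^{-1}\sqrt n\,\mathbb{P}_n\dot l(\cdot;\theta_0)+o_P(1),
\]
and the multivariate CLT, with $\mathbb{E}\dot l(O;\theta_0)=0$ by first-order optimality, gives the sandwich limit.

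\textbf{Bootstrap part.} Represent the weights as $w_{in}=E_i/\sum_j E_j$ with $E_i$ i.i.d.\ Exp$(1)$ independent of the data, so that $\hat\theta_{n,BB}$ minimizes $\frac1n\sum_i E_i\,l(O_i;\theta)$ and $\bar E_n\to1$. Conditionally on the data, I show $\hat\theta_{n,BB}-\hat\theta_n\to0$ in $P_W$-probability for almost every data sequence. For this I apply a uniform law to $\frac1n\sum_i E_i l(O_i;\theta)$; since $(E_i,O_i)$ are i.i.d.\ under $P_{OW}$, the almost-sure uniform convergence holds jointly, and Fubini transfers it to a conditional statement for $P_O$-almost every sequence. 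Expanding around $\hat\theta_n$ and using $\mathbb{P}_n\dot l(\cdot;\hat\theta_n)=0$ gives
\[
\sqrt n(\hat\theta_{n,BB}-\hat\theta_n)=-\big[\mathbb{P}^w_n\ddot l(\cdot;\tilde\theta)\big]^{-1}\frac{1}{\bar E_n}\frac{1}{\sqrt n}\sum_{i=1}^n (E_i-1)\,\dot l(O_i;\hat\theta_n).
\]
The Hessian factor converges to $J(\theta_0)$ conditionally almost surely, by the same envelope argument. The remaining sum is, conditionally on the data, a sum of independent mean-zero terms with covariance $\mathbb{P}_n\dot l\dot l^\top(\cdot;\hat\theta_n)\to I(\theta_0)$ almost surely. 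The Lindeberg–Feller CLT then gives a conditional $\mathcal N(0,I(\theta_0))$ limit for $P_O$-almost every sequence. The Lindeberg condition follows from $\max_i\|\dot l(O_i;\hat\theta_n)\|/\sqrt n\to0$ almost surely, which holds by finite second moments (Borel–Cantelli) plus local continuity; exponential tails of $E_i$ cause no difficulty. Slutsky's lemma, applied conditionally, finishes the argument.

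\textbf{Main obstacle.} I expect the hard step to be making the conditional statements hold \emph{almost surely} in the data rather than merely in probability. Specifically, I need conditional consistency of $\hat\theta_{n,BB}$ and uniform control of the weighted Hessian at the random point $\tilde\theta$ for $P_O$-almost every sequence. I would first try to get this from a Glivenko–Cantelli statement for the class $\{e\,\ddot l(o;\theta):\theta\in N(\theta_0)\}$ under the product measure, combined with the fact that an almost-sure joint limit implies an almost-sure conditional limit (via Fubini). Nonconvex losses would additionally need a separate argument that the weighted criterion has no spurious minimizers, which is absorbed into the ``regularity conditions''.
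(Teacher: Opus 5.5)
Your proof is correct, but it is not the route taken in the paper's own proofs. The paper does not prove Theorem~\ref{thmLLB}; it quotes it from \citet{Lyddon/etal:2019}. Its proofs of the analogous results, Theorems~\ref{ThmEstimatedNuisance} and~\ref{ThmLBBConsistency}, work as follows:
\begin{itemize}
\item they expand \emph{both} $\hat\theta_n$ and $\hat\theta_{n,BB}$ around $\theta_0$;
\item they obtain unconditional linear representations with $o_{P_{OW}}(1)$ remainders from Glivenko--Cantelli conditions on the derivative classes;
\item they subtract to get $-J(\theta_0)^{-1}\sqrt{n}(\mathbb{P}_n^w-\mathbb{P}_n)\dot l(\cdot;\theta_0)+o_{P_{OW}}(1)$, and then invoke exchangeable-bootstrap results for the weighted empirical process.
\end{itemize}
You instead expand around $\hat\theta_n$ and use the exact centering $\mathbb{P}_n\dot l(\cdot;\hat\theta_n)=0$. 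You write the Dirichlet weights as normalized i.i.d.\ exponentials and prove the conditional limit by a Lindeberg--Feller CLT with data-fixed coefficients. Finally, you use the single-sequence coupling of the $E_i$ across $n$ (harmless, since it changes no conditional law) together with Fubini to turn joint almost-sure limits into conditional almost-sure ones.

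The paper's route has real advantages. It is shorter and extends mechanically to nuisance parameters by adding terms to the linear representation. Because its leading term involves the fixed function $\dot l(\cdot;\theta_0)$, its CLT needs only $\mathbb{E}\Vert\dot l(O;\theta_0)\Vert^2<\infty$. However, an unconditional $o_{P_{OW}}(1)$ remainder only gives $P_W(\Vert R_n\Vert>\epsilon\mid O_1,\ldots,O_n)\to0$ in $P_O$-probability. By itself, therefore, it yields the conditional limit in probability rather than the ``almost surely $P_O^\infty$'' form stated. Your route delivers the almost-sure form directly.

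The price is that your coefficients $\dot l(O_i;\hat\theta_n)$ are data-dependent. Your listed conditions (an integrable envelope for $\ddot l$ plus $I(\theta_0)<\infty$) do not directly give $\mathbb{P}_n\dot l\dot l^\top(\cdot;\hat\theta_n)\to I(\theta_0)$ almost surely, nor $\max_i\Vert\dot l(O_i;\hat\theta_n)\Vert/\sqrt{n}\to0$. Pointwise ``local continuity'' does not control a maximum over $i$. Add a local envelope such as $\mathbb{E}\sup_{\theta\in N}\Vert\dot l(O;\theta)\Vert^2<\infty$; Borel--Cantelli and the uniform strong law then give both statements.

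Also, since $\dot l$ is vector-valued, a single intermediate point $\tilde\theta$ is not valid. Replace it by $\int_0^1\mathbb{P}_n^w\ddot l\{\cdot;\hat\theta_n+t(\hat\theta_{n,BB}-\hat\theta_n)\}\,dt$; your uniform law handles this equally well. The paper takes the same shortcut.
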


The conditional asymptotic distribution from Theorem \ref{thmLLB} is, in this case, the same as the limiting distribution of $\sqrt{n} ( \hat{\theta}_n-\theta_0)$. This is known as Bayesian bootstrap consistency. It is critical, however, to highlight the contrast between the two limiting statements. The randomness in the first comes through the random variables $O_1\ldots,O_n$, whereas in the second these variables are assumed fixed, and the only source of randomness is the weights. This Bayesian/frequentist duality is of major importance in interpreting Bayesian credible regions and in assessing the asymptotic frequentist properties of the posterior. The Bayesian credible interval around the posterior mean $\hat{\theta}_n$ is equivalent to the frequentist confidence interval based on $\hat{\theta}_n$. In essence, the conditional behavior of the posterior around its mean will mimic the frequentist behavior of the posterior mean around the true value of the parameter. Furthermore, the Bayesian confidence set will have (asymptotically) the correct coverage probability. In Bayesian analysis, the parametric Bernstein-von Mises theorem establishes such a bridge between the maximum \textit{a posteriori} mean and the maximum likelihood estimator, see \citet{Ghosh/Ramamoorthi:2003}. Such duality is studied and known to hold for semi-parametric M-estimators; see \citet{Cheng/Huang:2010} and \citet{Kosorok/Ma:2005}.
\subsection{Loss-Likelihood Bootstrap in Multi-Parameter Settings}
When $\theta$ is a multicomponent vector, we may investigate the posterior distribution of a sub-component of $\theta$ by marginalizing in the usual way.  Specifically, the marginal posterior (asymptotic) variance can be computed by taking the relevant block of the joint covariance matrix $J(\theta_0)^{-1} I(\theta_0) J\left(\theta_0\right)^{-1}$.  Similarly, the variance of the (asymptotic) conditional distribution for one sub-component given another can be computed using the usual properties of the multivariate Normal distribution. In this case, by standard arguments, it is evident that the variance of the marginal posterior is always at least as large as the conditional posterior variance, in that the difference between the former and the latter is non-negative definite.

The loss-likelihood bootstrap is explicitly formulated as a joint (weighted) minimization over all elements of $\theta$.  That is, $\theta_0$ is an explicit target of the inference, and the loss function encapsulates the relationship between the observed data and all elements of $\theta$ simultaneously.   However, as the joint distribution is a valid probability distribution, marginals resulting from it are also valid distributions.  Thus an interpretation of the computed marginal posterior in isolation is not straightforward; this is in part a consequence of the absence of a generative probability model for the observable quantities.  

If $\theta^\top = (\theta_1^\top,\theta_2^\top)$, the formulation
\[
\theta_{10}(F)= \text{argmin}_{\theta_1} \int_{\mathcal{X}} l(o;\theta_1,\theta_{20})dF(o)
\]
is natural, but requires a further step of defining $\theta_{20}$.  Fundamentally, the two parameters are intrinsically linked through the loss function, and identification of the targeted parameter through some marginalized approach is not straightforward.  It may also be the case that $\theta_{10}$ is not identified by using a uniquely defined loss function, and that two different loss functions identifying the same parameter may not yield the same marginal inferences.  Specifically, consistent estimation may still be possible, but different (asymptotic) variances may be obtained.

If the loss function is separable, say
\[
l(o;\theta_1,\theta_2) = l_1(o;\theta_1) + l_2(o;\theta_2)
\]
then marginal interpretations for $\theta_1$ and $\theta_2$ may be possible.  Separability arises in the conventional parametric setting under exchangeability and de Finetti's representation for a hierarchical model, say where
\[
f_1(y|x;\theta_1) f_2(x;\theta_2)
\]
and where $\theta_1$ and $\theta_2$ have interpretations in the conditional and marginal models respectively.  If there is no separability, then it is possible that a profiling approach may be adopted; if $\theta_1$ is the target, then for each $\theta_1$ we consider minimizing for $\theta_2$, and examining the profiled loss function as $\theta_1$ varies.  However, this leaves the question of where and how the Dirichlet weights should be utilized.  We attempt to resolve this point in the following sections.

\section{Bootstrap Approaches for Estimating Equations}{\label{ExtensiontoEstimatingEquations}}
The loss-likelihood bootstrap focuses on minimization of a loss function, but there are obvious links an approach based on root-finding, as in the frequentist approach based on estimating equations.  In this section, we adapt the arguments discussed above to parameters defined as solutions to estimating equations. We then discuss extensions to estimating equations involving a finite-dimensional nuisance parameter, and motivate the problem of obtaining a sample from the posterior distribution of the targeted parameter that possesses good frequentist properties.

\subsection{An Illustrative Model}{\label{AnIllustrativeModel}}
Let $O=(Y,X,Z,U_1,U_2)$ and let $O_1,\ldots,O_n$ be i.i.d random variables. Suppose that the data generating mechanism is
\begin{eqnarray}\label{plm}
  Y & = & \theta_0 Z +g_0(X) +U_1,\\
  Z & = & e_0(X)+U_2,
\end{eqnarray}
where $\theta_0 \in \mathbb{R}$, $g_0$ and $e_0$ are real-valued functions of $X \in \mathbb{R}^q$, $U_1 \sim N(0,\sigma^2)$, and $\mathbb{E}[U_2|X]=0$.  The model in \eqref{plm} is called the \textit{partially linear model}. Flexible estimation of $g_0$ with minimal assumption on its form is in general not an obvious task.

Suppose that $Z$ is a Bernoulli random variable with $P(Z=1|X)=e_0(X)$; in this case $U_2 = \mathbb{1}\{U < e_0(X) \} - e_0(X)$ where $U \sim Uniform(0,1)$ is independent of $U_1$ and $X$, so that $U_2$ is uncorrelated with $e_0(X)$ and has variance $e_0(X) (1-e_0(X))$.  In the causal inference setting, the function $e_0$ is known as the propensity score and $\theta_0$, known as the average treatment effect (ATE), represents the targeted parameter. Suppose further that $e_0(x)=e(x;h_0)=\expit(h_0^\top x)$, where $\expit(t)=e^t/(1+e^t)$ and $h_0$ is a nuisance parameter in $\mathbb{R}^q$.  Defining
\begin{equation}\label{RMN}
m(O;\theta_0,h_0)=(Y-\theta_0 Z)\left\{Z-e(X;h_0)\right\},
\end{equation}
\citet{Robins/etal:1992} use the fact that $\mathbb{E} m(O;\theta_0,h_0) =0$ 
to obtain an estimator of $\theta_0$. Since $h_0$ is not known in practice, it is estimated by performing a logistic regression of $Z$ on $X$, which results in a estimator $\hat{h}$ of $h_0$. An estimator $\hat{\theta}_n$ of $\theta_0$ depending on $\hat{h}$ is then obtained as the solution to
\[
\sum_{i=1}^n (Y_i-\theta Z_i)\left\{Z_i-e(X_i;\hat{h})\right\}=0,
\]
so that
\begin{equation} \label{theta-G-est}
\hat{\theta}_n= \frac{\sum\limits_{i=1}^n Y_i\left\{Z_i-e(X_i;\hat{h})\right\}}{\sum\limits_{i=1}^n Z_i\left\{Z_i-e(X_i;\hat{h})\right\}}.
\end{equation}

There are other possible mis-specified models that can be of interest. For example, we might consider the moment restriction $\mathbb{E} k(O;\theta_0,\phi_0,h_0)=0$, where
\begin{equation}\label{psree}
k(O;\theta,\phi,h)= \left\{Y- \theta Z - \phi e(X;h)\right\}\left\{Z-e(X;h)\right\},
\end{equation}
and $\phi_0=\mathbb{E}\left\{g_0(X)e_0(X) \right\}/\mathbb{E}\left\{e_0(X)^2\right\}$. Solving for $\psi$ and $\phi$ in terms of $h$ is equivalent to regressing the outcome $Y$ on the treatment $Z$ and the propensity score. This approach attempts to account for $g_0(X)$ by projecting it onto the span of $e_0(X)$ whereas the function $m$ used previously does not account for $g_0(X)$. We refer to \citet{Moodie/Stephens:2022} for details on the mis-specified model described by $k$, which will be called the propensity score regression model.   Note that, under certain conditions, the functions in \eqref{RMN} and \eqref{psree} yield consistent and asymptotically normal (frequentist) estimators.

There are other estimators of treatment effects that rely exclusively on knowledge of the propensity score. Such an estimator is the IPW estimator of $\theta_0$. This estimator stems from the moment restriction
\begin{equation}\label{ipwee}
\mathbb{E}\left[\theta_0 -\left\{\frac{ZY}{e(X;h_0)}-\frac{(1-Z)Y}{1-e(X;h_0)}\right\}\right] =0.
\end{equation}
Another such estimator is the estimator of the average treatment effect on the treated (ATT). The  ATT, $\rho_0$, defined by $\rho_0=\mathbb{E}\left\{Y(1)-Y(0) \vert Z=1 \right\}$, satisfies the moment restriction $\mathbb{E} r(O;\rho_0,h_0)=0$, where
\[
r(O;\rho,h)= ZY-\frac{Y(1-Z)e(X;h)}{1-e(X;h)}-\rho Z.
\]
We refer to \citet{Moodie/etal:2018} for an overview of the ATT. The resulting estimator is not affected by the form of the outcome model but is contingent only upon the correct specification of the propensity model. Such a moment restriction cannot be seen as a full specification of the likelihood, but rather a constraint on the distribution of the observables, determined by the general assumptions in the causal inference framework.

\subsection{Targeting Parameters using Estimating Equations}
{\label{NoNuisanceParameter}}

Let $m: \mathcal{O} \times \Theta \to \mathbb{R}^p$.  For $F \in \mathcal{F}$, let $\theta(F)$ denote the solution to
\[
\int_\mathcal{O}m(o;\theta)dF(o)=0.
\]
and let $\theta_0 \equiv \theta(F_O)$. Using the same arguments discussed in section \ref{ExtensiontoEstimatingEquations}, we can obtain a sample from the posterior distribution of $\theta$ by implementing the following algorithm:
\begin{algo}{Bayesian Bootstrap for Estimating Equations} {\label{algononuisance}}
 \begin{enumerate}
        \item For each $j \in \left\{1,\ldots,B\right\}$,
        \begin{enumerate}
        \item  Draw random weights $(w^{(j)}_{1n},\ldots,w^{(j)}_{nn}) \sim \text{Dir}(1,\ldots,1)$.
        \item  Find $\theta^{(j)}$ as the solution to
               $
               \sum\limits_{i=1}^n w^{(j)}_{in} m(O_i;\theta)=0 
               $
        \end{enumerate}
        \item Output $\theta^{(1)},\ldots, \theta^{(B)}$. 
    \end{enumerate}
\end{algo}
The following standard result, see for example \citet{Kosorok:2008}, establishes the necessary Bayesian/frequentist duality for Algorithm \ref{algononuisance}. This theorem can also be proved using the same techniques and comparable regularity conditions to those that will be used in Theorem \ref{ThmEstimatedNuisance} and Theorem \ref{ThmLBBConsistency}, so we defer discussion until later.
\begin{theorem}\label{thmllbroot}
    Let $\hat{\theta}_n$ and $\hat{\theta}_{n,BB}$ satisfy 
    \[
    \sum\limits_{i=1}^n m(O_i;\hat{\theta}_n)=0 \qquad \text{and} \qquad  \sum\limits_{i=1}^n w_{in} m(O_i;\hat{\theta}_{n,BB})=0
    \]
    respectively.  Then, under regularity conditions,
\[
   \sqrt{n} ( \hat{\theta}_n-\theta_0 ) 
 \qquad \text{and} \qquad   \sqrt{n} (\hat{\theta}_{n,BB}-\hat{\theta}_n)|O_1,\ldots O_n
   \]
   converge in distribution as $n \to \infty$ to a random variable with a $\mathcal{N}(0, L)$ distribution, where 
   \[
   L= \left[\mathbb{E}_{P_O}\left\{ \frac{\partial{m(O;\theta_0)}}{\partial \theta^\top}\right\}\right]^{-1} \mathbb{E}_{P_O}\left\{ m(O;\theta_0)m(O;\theta_0)^\top \right\}\left[\mathbb{E}_{P_O}\left\{ \frac{\partial{m(O;\theta_0)}}{\partial \theta^\top}\right\}\right]^{-\top}
   \]
\end{theorem}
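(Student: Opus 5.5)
The plan is to prove the frequentist statement by a standard Z-estimator expansion, and then to prove the conditional (bootstrap) statement by showing that the Dirichlet-weighted empirical process $\mathbb{P}_n^w - \mathbb{P}_n$, rescaled by $\sqrt{n}$, has the same conditional Gaussian limit as $\sqrt{n}(\mathbb{P}_n - P_O)$. Write $\Psi(\theta)=P_O m(\cdot;\theta)$, $\Psi_n(\theta)=\mathbb{P}_n m(\cdot;\theta)$ and $\Psi_n^w(\theta)=\mathbb{P}_n^w m(\cdot;\theta)$. Set $V=\mathbb{E}_{P_O}\{\partial m(O;\theta_0)/\partial\theta^\top\}$, which is the derivative of $\Psi$ at $\theta_0$ (differentiation under the integral), and $S=\mathbb{E}_{P_O}\{m(O;\theta_0)m(O;\theta_0)^\top\}$.

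The regularity conditions I would impose are:
\begin{enumerate*}[label=(\roman*)]
\item $\theta_0$ is the unique (well-separated) zero of $\Psi$ and $V$ is nonsingular;
\item $\{m(\cdot;\theta):\theta\in\Theta\}$ is $P_O$-Glivenko--Cantelli;
\item on a neighbourhood of $\theta_0$ the class is $P_O$-Donsker, with $P_O\|m(\cdot;\theta)-m(\cdot;\theta_0)\|^2\to0$ as $\theta\to\theta_0$, and it has a square-integrable envelope.
\end{enumerate*}

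\textbf{Consistency.} First I show $\hat\theta_n\to\theta_0$ and $\hat\theta_{n,BB}\to\theta_0$. For $\hat\theta_n$ this follows from (i) and (ii) via the usual argmin/zero consistency theorem. For $\hat\theta_{n,BB}$, represent the weights as $w_{in}=\xi_i/\sum_j\xi_j$ with $\xi_i$ i.i.d.\ Exp$(1)$. Then $\mathbb{P}_n^w=\bar\xi_n^{-1}\,n^{-1}\sum_i\xi_i\delta_{O_i}$ and $\bar\xi_n\to1$, so the multiplier Glivenko--Cantelli theorem gives $\sup_\theta\|\Psi_n^w(\theta)-\Psi(\theta)\|\to0$ outer almost surely in $P_{OW}$. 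Consistency of $\hat\theta_{n,BB}$ then follows in the same way.

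\textbf{Linearisation.} Let $\mathbb{G}_n=\sqrt{n}(\mathbb{P}_n-P_O)$ and $\mathbb{G}_n^w=\sqrt{n}(\mathbb{P}_n^w-\mathbb{P}_n)$. From (iii) and consistency, asymptotic equicontinuity of $\mathbb{G}_n$ gives
\[
\mathbb{G}_n m(\cdot;\hat\theta_n)-\mathbb{G}_n m(\cdot;\theta_0)=o_P(1).
\]
Using $\Psi_n(\hat\theta_n)=0$ and expanding $\Psi(\hat\theta_n)=V(\hat\theta_n-\theta_0)+o(\|\hat\theta_n-\theta_0\|)$ then yields
\[
\sqrt{n}(\hat\theta_n-\theta_0)=-V^{-1}\mathbb{G}_n m(\cdot;\theta_0)+o_P(1)\rightsquigarrow\mathcal{N}(0,V^{-1}SV^{-\top})=\mathcal{N}(0,L).
\]

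For the bootstrap, I write $0=\Psi_n^w(\hat\theta_{n,BB})$ and decompose
\[
0=n^{-1/2}\mathbb{G}_n^w m(\cdot;\hat\theta_{n,BB})+n^{-1/2}\mathbb{G}_n m(\cdot;\hat\theta_{n,BB})+\Psi(\hat\theta_{n,BB}).
\]
The middle and last terms are handled exactly as before. Subtracting the expansion for $\hat\theta_n$ then gives
\[
\sqrt{n}(\hat\theta_{n,BB}-\hat\theta_n)=-V^{-1}\mathbb{G}_n^w m(\cdot;\theta_0)+o_{P_{OW}}(1).
\]

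\textbf{Conditional limit.} The conditional limit of $\mathbb{G}_n^w m(\cdot;\theta_0)$ is Gaussian. Since $\mathbb{G}_n^w m=\bar\xi_n^{-1}n^{-1/2}\sum_i(\xi_i-\bar\xi_n)\{m(O_i;\theta_0)\}$, a Lindeberg CLT conditional on the data shows that it converges to $\mathcal{N}(0,S)$ almost surely in $O$. This uses that $\operatorname{Var}(\xi)=1$ and that $\mathbb{P}_n mm^\top\to S$ almost surely by the SLLN. Hence
\[
\sqrt{n}(\hat\theta_{n,BB}-\hat\theta_n)\mid O_{1:n}\rightsquigarrow\mathcal{N}(0,L).
\]

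\textbf{Main obstacle.} The difficult step is the conditional equicontinuity of the multiplier process:
\[
\mathbb{G}_n^w m(\cdot;\hat\theta_{n,BB})-\mathbb{G}_n^w m(\cdot;\theta_0)=o_{P_W}(1)\quad\text{given the data, almost surely (or in probability) in } P_O.
\]
This is needed so that the remainder is negligible conditionally rather than merely jointly. My first approach would be the conditional multiplier Donsker theorem of \citet{Kosorok:2008}, which is applicable because the Exp$(1)$ multipliers have $\|\xi\|_{2,1}<\infty$, combined with the consistency of $\hat\theta_{n,BB}$. A joint $o_{P_{OW}}(1)$ remainder then converts to a conditional one by Fubini/Markov. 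If almost sure conditional convergence is too demanding, I would settle for the statement in outer probability, which suffices for the coverage interpretation.
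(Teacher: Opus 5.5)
Your proposal is correct, and its skeleton matches the paper's. You expand $\sqrt{n}(\hat\theta_n-\theta_0)$ and $\sqrt{n}(\hat\theta_{n,BB}-\theta_0)$ unconditionally around $\theta_0$. Subtracting gives $-V^{-1}\sqrt{n}(\mathbb{P}_n^w-\mathbb{P}_n)m(\cdot;\theta_0)+o_{P_{OW}}(1)$, where your $V$ is the paper's $M_{\theta_0}$, and a conditional CLT for the Dirichlet-weighted sum finishes the argument. Where you differ is the lemma that controls the remainder. The paper gives no separate proof. It cites \citet{Kosorok:2008} and the technique of Theorems \ref{ThmEstimatedNuisance} and \ref{ThmLBBConsistency}, which is a mean-value expansion
\[
\sum_{i=1}^n w_{in}m(O_i;\hat\theta_{n,BB})=\sum_{i=1}^n w_{in}m(O_i;\theta_0)+\Bigl\{\sum_{i=1}^n w_{in}\frac{\partial m(O_i;\tilde\theta)}{\partial\theta^\top}\Bigr\}(\hat\theta_{n,BB}-\theta_0).
\]
The bracket is sent to $M_{\theta_0}$ by a (bootstrap) Glivenko--Cantelli property of $\{\partial m(\cdot;\theta)/\partial\theta^\top\}$ near $\theta_0$, together with consistency. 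In that route the remainder is a uniform-LLN error multiplied by $\sqrt{n}(\hat\theta_{n,BB}-\theta_0)$. The $\sqrt{n}$-rate comes directly from inverting the bracket, and only a finite-dimensional CLT at the fixed point $\theta_0$ is required, so your ``main obstacle'' (equicontinuity of $\mathbb{G}_n^w$ at the random point $\hat\theta_{n,BB}$) never arises. The cost is pointwise differentiability of $m$ in $\theta$ and a Glivenko--Cantelli condition on its derivative. You instead use Donsker-class equicontinuity of $\mathbb{G}_n$ and $\mathbb{G}_n^w$ with $L_2$-continuity, so only $\Psi$ needs to be differentiable. This handles non-smooth estimating functions, which is exactly the generalisation the paper mentions right after the statement. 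The price is invoking the conditional weighted-bootstrap Donsker theorem for normalised exponential multipliers, which you do correctly ($\|\xi\|_{2,1}=2<\infty$). Your explicit Fubini/Markov conversion is also worth keeping. Both your argument and the paper's subtraction argument deliver only a joint $o_{P_{OW}}(1)$ remainder, and that yields conditional convergence in $P_O$-probability rather than almost surely.
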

\noindent The second result yields a fully Bayesian (asymptotic) posterior distribution for the targeted parameter.  If $l$ from Theorem \ref{thmLLB} is differentiable with respect to $\theta$, then the minimization problem is equivalent to that addressed in Theorem \ref{thmllbroot}.  However, adopting the estimating equations viewpoint allows for some more generality. Differentiability of $m(O;\theta)$ can be replaced by the differentiability of $\mathbb{E}_{P_0}m(O;\theta)$ provided that $m$ belongs to a Donsker class of functions, allowing the use of the properties of the exchangeable bootstrap for Donsker classes; see \citet{VanderVaart/Wellner:1996}. 

\subsection{Targeting in the Presence of a Nuisance Parameter}
We now assume the presence of a finite-dimensional nuisance parameter $h_0$, and we assume that $\theta_0$ is the solution to 
\[
\int_{\mathcal{O}}m(o;\theta,h_0)dF_O(o)=0.
\]
In the multi-parameter setting, this is a natural way to facilitate targeting of $\theta_0$. In an analogous fashion to the methods previously discussed, we can obtain a posterior distribution by sampling Dirichlet weights.  In practice however, $h_0$ is unknown and needs to be estimated. If $\tilde{h}$ is an estimator of $h_0$, the exact method for handling the uncertainty of $\tilde{h}$ when obtaining a sample from the posterior of the targeted parameter is a notion of utmost importance. We seek a posterior distribution that satisfies a frequentist/Bayesian duality theorem for the reasons mentioned in subsection \ref{Loss-LikelihoodBootstrap}. 

Many possible approaches may be adopted when tackling the problem of incorporating $\tilde{h}$ in the posterior inference for $\theta$, and several questions arise. First, should $\tilde{h}$ be obtained in a separate first step and remain fixed throughout the posterior calculation? In other words, the posterior distribution of $\theta$ obtained using the Bayesian bootstrap is conditional on the knowledge of $\tilde{h}$, which remains unchanged during the Bayesian procedure.  Here, no attempt is made to propagate the uncertainty attached to $\tilde{h}$. What are the properties of the resulting posterior for the targeted parameter, and how does this posterior depend on the way $\tilde{h}$ is obtained?  Secondly, if we were to account for the uncertainty attached to $\tilde{h}$, how should it be handled in order to obtain a posterior with good frequentist properties?

We address these questions section \ref{Asymptotics}.   Our focus is on Bayesian/frequentist duality, and is therefore asymptotic in nature.  However, we also study finite sample implications.  We note that Bayesian approaches based on estimating equations with different specifications may yield posterior distributions with different variances.  For example, in the partially linear model, attempts to represent $g_0$ using a flexible parametric family will usually lead to a smaller posterior variance than that computed for estimator \eqref{theta-G-est}.  Such results are well-studied in the frequentist literature, and our results allow us to draw similar conclusions in the Bayesian setting.

\section{Asymptotic Results}{\label{Asymptotics}}

Returning to the illustrative model of section \ref{AnIllustrativeModel}, the estimator in equation \eqref{theta-G-est} has tractable properties. The frequentist theory of estimating equations enables us to conclude that $\sqrt{n} (\hat{\theta}_n-\theta_0 )$ is asymptotically normal with an asymptotic variance $V$, see \citet{Tsiatis:2006}. \citet{Robins/etal:1992} remark that $V$ is smaller than asymptotic variance $\Sigma$ of the estimator of $\theta_0$ had $h_0$ been assumed to be known and not estimated. This paradoxical phenomenon, due to the fact that adding variability to $h_0$ reduces variability in $\theta_0$, is generalized by \citet{Henmi/Eguchi:2004} who show that an estimated nuisance parameter may lead to an improved asymptotic variance of the estimator of the targeted parameter.

In the next sections, we will deviate from this frequentist framework in order to conduct Bayesian inference concerning the targeted parameter $\theta_0$ in presence of a nuisance parameter $h_0$. 

\subsection{The Linked Bayesian Bootstrap}{\label{ApproachestoBCI}}
In the frequentist framework, adjustments based on the propensity score are typically carried out using regression, matching, inverse weighting or stratification. Two-step procedures, in which an estimated propensity score is used to infer about treatment effects, are widely studied. In a Bayesian framework, a two-step procedure is contrary to the conventional Bayesian framework of 'likelihood times prior'. A detailed survey of existing parametric approaches and challenges in Bayesian causal inference can be found in \citet{Li/etal:2022}. Methods that assume a joint parametric model for the treatment effect and propensity scores, see \citet{Mccandless/etal:2009}, produce feedback leading to biased estimates. Other approaches to remedy this feedback by the cutting feedback method; see \citet{McCandless/etal:2010} and two-step approaches in which Bayesian inference is performed conditional on the knowledge of an estimate of the propensity score; see \citet{Hoshino:2008} and \citet{Kaplan/Chen:2012}. While there has been no consensus that the two-step and cutting feedback approaches can be seen as fully Bayesian, \citet{Stephens/etal:2022} deviate from the Bayesian parametric framework and use Bayesian non-parametrics methods in order to infer about a finite-dimensional parameter. They argue that two-step methods are valid Bayesian procedures in causal structural models on the condition that a Bayesian estimate of the propensity score is carefully obtained.

In the context of the propensity score regression model given by the estimating function \eqref{psree} in subsection \ref{AnIllustrativeModel}, it is possible to obtain a posterior sample  of  $(\theta,\phi)$ by the following procedure, termed the Linked Bayesian Bootstrap: 
\begin{algo}{Linked Bayesian Bootstrap}{\label{LBBOpt}}
     \begin{enumerate}
        \item For each $j \in \left\{1,\ldots,B\right\}$,
        \begin{enumerate}
        \item  Draw random weights $(w^{(j)}_{1n},\ldots,w^{(j)}_{nn}) \sim \text{Dir}(1,\ldots,1)$.
        \item  Let $h^{(j)}=argmin_{h} \sum\limits_{i=1}^n w_{in} \log f_{2}(z_i|x_i;h) $
        \item  Let $(\theta^{(j)},\phi^{(j)})$ be defined as
               \[
               (\theta^{(j)},\phi^{(j)})= argmin_{(\theta,\phi)}\sum\limits_{i=1}^n w^{(j)}_{in} \log f_{1}(y|x,z,\theta,\phi, \hat{h}^{(j)})
               \]
        \end{enumerate}
        \item Output $(\theta^{(1)},\phi^{(1)}),\ldots, (\theta^{(B)},\phi^{(B)})$. 
    \end{enumerate}
\end{algo}

\noindent where $f_2(z|x)$ is the correctly specified conditional density of $Z|X$ and $f_1(y|x,z)$ is the consciously mis-specified density of $Y|X,Z$. The term 'Linked' is to emphasize that a single set of Dirichlet weights is retained at both optimization steps and provides the correct way to propagate the inferential uncertainty of $h$ in order to obtain the posterior of the targeted parameter  $(\theta,\phi)$.

In the remainder of this section, we will examine the Linked Bayesian Bootstrap through other lenses. After theoretically assessing its favorable frequency-based properties, we will interpret the Linked Bayesian Bootstrap in terms of estimating equations rather than loss-minimization, allowing us to view it as a one-step procedure, that simultaneously allows the sampling from the joint posterior distribution of $\theta$ and $h$. 

\subsection{Conventional Plug-In Method}{\label{Disadvantagesofnotadjustingfortheuncertainty}}
In this section, we examine the effect of not propagating the uncertainty attached to the nuisance parameter estimator in a two-step procedure. For example, suppose that $\hat{h}$ is an estimator of $h_0$ that is asymptotically linear, that is 
\[
\sqrt{n} (\hat{h}-h_0  ) = \frac{1}{\sqrt{n}} \sum_{i=1}^n \psi(O_i;h_0) +o_{P_{O}}(1),
\]
where $\mathbb{E}_{P_O} \psi(O;h_0)=0$ and $\text{Var}_{P_0}\psi(O;h_0)< \infty$.
The core idea behind this conventional plug-in approach lies in eliminating feedback with an accurate estimate of the nuisance parameter, see \citet{Zigler/etal:2013}. \citet{Saarela/etal:2015} point out that frequentist two-step procedures, that often rely on an adjustment based on the propensity score, do not have Bayesian counterparts and cannot be viewed as fully Bayesian, in the context of the 'likelihood times prior' conventional Bayes approach. \citet{Graham/etal:2016} propose a variance correction in order to accommodate the inferential uncertainty of the propensity score in two-step procedures for doubly robust models. 

We show that not accounting for the inferential uncertainty attached to $\hat{h}$ in estimating $\theta$ leads to undesirable behavior of the posterior estimated through the Bayesian bootstrap, regardless of the estimator $\hat{h}$. While classical frequentist parametric results incite the use of an estimated nuisance parameter in order to reduce the variability of the targeted parameter as explained in subsection \ref{AnIllustrativeModel}, our Bayesian calculation reveals that the asymptotic posterior variance is unaffected by a fixed estimate. Moreover, we show the surprising result that the asymptotic posterior variance does not depend on $\hat{h}$ and that $\hat{h}$ influences only the posterior mean. More specifically, the asymptotic posterior variance in this case is equal to the asymptotic posterior variance had $h_0$ been assumed to be known throughout. Such a result may seem advantageous, nut we expand upon its drawbacks in the discussion following the statement of Theorem \ref{ThmEstimatedNuisance}.

There are several possible estimators $\hat{h}$ that can be considered: for example, if $h_0$ is known, one can take $\hat{h}=h_0$. If $h_0$ is unknown we can obtain $\hat{h}$ by a regular frequentist logistic regression of the treatment on the covariates. From a Bayesian perspective, it may be more adequate perform a Bayesian analysis on the nuisance parameter. Estimators of $h_0$ of Bayesian nature can still be considered by first obtaining a sample from the posterior of $h$, through the loss-likelihood bootstrap for example, and then by letting $\hat{h}$ be the average of the sample obtained. One can show that these three suggested estimators have different influence functions with different variances, and yet the asymptotic posterior variance of the targeted parameter obtained without accounting for the inferential uncertainty of the point estimate of $h_0$ remains unchanged. 

In the illustrative model of Section \ref{AnIllustrativeModel}, the following computational approach may be adopted:
\begin{algo}{Bayesian Bootstrap for the Conventional Plug-in Method}{\label{algonuisancefixed}}
     \begin{enumerate}
        \item Obtain an estimator $\hat{h}$ of $h_0$ that is asymptotically linear.
        \item For each $j \in \left\{1,\ldots,B\right\}$,
        \begin{enumerate}
        \item  Draw random weights $(w^{(j)}_{1n},\ldots,w^{(j)}_{nn}) \sim \text{Dir}(1,\ldots,1)$.
        \item  Find $\theta^{(j)}$ as the solution to
               \[
               \sum\limits_{i=1}^n w^{(j)}_{in}\left\{Y_i-\theta Z_i\right\}\left\{Z_i-e(X_i;\hat{h})\right\}=0 
               \]
        \end{enumerate}
        \item Output $\theta^{(1)},\ldots, \theta^{(B)}$. 
    \end{enumerate}
\end{algo}

To compute the posterior variance, we resort to rudimentary tools from empirical processes. We will introduce $\hat{\theta}_n$ as the solution to the following estimating equation
\[
\sum\limits_{i=1}^n m(O_i,\theta,\hat{h})=0.
\]
In practice, $\hat{\theta}_n$ is a frequentist estimator of $\theta_0$ obtained using a plug-in estimator $\hat{h}$ of $h_0$. Even though it may be counterintuitive or unusual to use a Bayesian estimator of $h_0$ in a first step in order to obtain a frequentist estimator of $\theta_0$, we underline that $\hat{\theta}_n$ is introduced solely as a theoretical tool that facilitates identifying the asymptotic posterior distribution. We elaborate on this matter after stating Theorem \ref{ThmEstimatedNuisance}. The following theorem formalizes the results discussed above:
\begin{theorem}{\label{ThmEstimatedNuisance}}
    Suppose that $\hat{h}$ is an estimator of $h_0$ satisfying
        \[
        \sqrt{n}(\hat{h}-h_0)= \displaystyle{\frac{1}{\sqrt{n}}} \sum\limits_{i=1}^n \psi(O_i;h_0)+o_{P_{O}}(1),
        \]
    where $\mathbb{E}_{P_O} \psi(O;h_0)=0$ and $\text{Var}_{P_O} \psi(O;h_0)<\infty$. Let $\hat{\theta}_n$ and $\hat{\theta}_{n,BB}$ satisfy \[
    \sum\limits_{i=1}^n m(O_i,\hat{\theta}_n,\hat{h})=0 \qquad \text{and} \qquad \sum\limits_{i=1}^nw_{in} m(O_i,\hat{\theta}_{n,BB},\hat{h})=0,
    \]
    respectively.  Let
    \[
    M_{\theta_0} = \mathbb{E}_{P_O}\left\{  \frac{\partial{m(O;\theta_0, h_0)}}{\partial \theta^\top } \right\} \equiv \mathbb{E}_{P_O}\left\{ \left. \frac{\partial{m(O;\theta, h_0)}}{\partial \theta^\top } \right |_{\theta = \theta_0} \right\}.
    \]
    Then, under the regularity conditions stated in Assumption \ref{ass2}, the following hold:
 \begin{enumerate}
\item[(a)] 
\[
\sqrt{n}\left(\hat{\theta}_n-\theta_0\right)=\frac{-1}{\sqrt{n}}M_{\theta_0}^{-1}\sum\limits_{i=1}^n  \left\{m(O_i;\theta_0,h_0) + M_{h_0}\psi(O_i;h_0)\right\}+o_{P_O}(1),
\]
\item[(b)]
\[
\sqrt{n}\left(\hat{\theta}_{n,BB}-\hat{\theta}_n\right)=-\sqrt{n}M_{\theta_0}^{-1}\sum\limits_{i=1}^n \left(w_{in}-\frac{1}{n}\right)m(O_i;\theta_0,h_0) + o_{P_{OW}}(1)
\]
\end{enumerate}
\end{theorem}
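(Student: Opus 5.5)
The plan is the standard Z-estimator linearization, run twice: once for the empirical process $\mathbb{P}_n$ and once for the Dirichlet-weighted process $\mathbb{P}_n^w$. Throughout, $M_{h_0}=\mathbb{E}_{P_O}\{\partial m(O;\theta_0,h_0)/\partial h^\top\}$. I assume that Assumption \ref{ass2} supplies three things: (i) the class $\{m(\cdot;\theta,h)\}$ restricted to a neighbourhood of $(\theta_0,h_0)$ is $P_O$-Donsker, with $L_2(P_O)$-continuity at $(\theta_0,h_0)$; (ii) $(\theta,h)\mapsto \mathbb{E}_{P_O} m(O;\theta,h)$ is differentiable at $(\theta_0,h_0)$ with $M_{\theta_0}$ invertible; (iii) $\hat\theta_n$ and $\hat\theta_{n,BB}$ are consistent for $\theta_0$, unconditionally and in $P_{OW}$-probability respectively.

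\textbf{Part (a).} From $\mathbb{P}_n m(\cdot;\hat\theta_n,\hat h)=0$ and $P_O m(\cdot;\theta_0,h_0)=0$, write
\[
0=\sqrt n(\mathbb{P}_n-P_O)m(\cdot;\hat\theta_n,\hat h)+\sqrt n\,P_O\{m(\cdot;\hat\theta_n,\hat h)-m(\cdot;\theta_0,h_0)\}.
\]
Asymptotic equicontinuity of the Donsker class, together with consistency of $(\hat\theta_n,\hat h)$, lets me replace the first term by $\sqrt n(\mathbb{P}_n-P_O)m(\cdot;\theta_0,h_0)+o_{P_O}(1)$. Differentiability expands the second term as
\[
M_{\theta_0}\sqrt n(\hat\theta_n-\theta_0)+M_{h_0}\sqrt n(\hat h-h_0)+o_{P_O}(1+\sqrt n\|\hat\theta_n-\theta_0\|).
\]
The usual argument then gives $\sqrt n$-consistency of $\hat\theta_n$, which absorbs the remainder. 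Substituting the asymptotic linear expansion of $\hat h$ and inverting $M_{\theta_0}$ yields (a).

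\textbf{Part (b).} I repeat the argument with $\mathbb{P}_n^w$ in place of $\mathbb{P}_n$, keeping the same $\hat h$, and subtract the identity for $\hat\theta_n$. Since $\hat h$ enters both equations identically, the $M_{h_0}$ term cancels; this is the source of the claimed absence of any $\hat h$-dependence in the variance. What remains is
\[
0=\sqrt n(\mathbb{P}_n^w-\mathbb{P}_n)m(\cdot;\hat\theta_{n,BB},\hat h)+\sqrt n\,P_O\{m(\cdot;\hat\theta_{n,BB},\hat h)-m(\cdot;\hat\theta_n,\hat h)\}+o_{P_{OW}}(1).
\]
The second term becomes $M_{\theta_0}\sqrt n(\hat\theta_{n,BB}-\hat\theta_n)+o_{P_{OW}}(\cdot)$ by differentiability.

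\textbf{Main obstacle.} The step I expect to be hardest is replacing $(\mathbb{P}_n^w-\mathbb{P}_n)m(\cdot;\hat\theta_{n,BB},\hat h)$ by $(\mathbb{P}_n^w-\mathbb{P}_n)m(\cdot;\theta_0,h_0)$ up to $o_{P_{OW}}(n^{-1/2})$, because the function is evaluated at random, weight-dependent points. For this I would invoke the exchangeable-bootstrap results of \citet{VanderVaart/Wellner:1996}. Dirichlet$(1,\ldots,1)$ weights equal normalized i.i.d.\ Exp$(1)$ variables, so $\sqrt n(\mathbb{P}_n^w-\mathbb{P}_n)$ is asymptotically equicontinuous in outer $P_{OW}$-probability over a Donsker class. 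Combined with $\hat\theta_{n,BB}\to\theta_0$, $\hat h\to h_0$, and $L_2$-continuity, this gives the replacement.

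Finally, writing $\sqrt n(\mathbb{P}_n^w-\mathbb{P}_n)m=\sqrt n\sum_i(w_{in}-1/n)m(O_i;\theta_0,h_0)$, which is $O_{P_{OW}}(1)$, yields $\sqrt n$-consistency of $\hat\theta_{n,BB}-\hat\theta_n$. Inverting $M_{\theta_0}$ then gives (b).
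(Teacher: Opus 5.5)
Your argument is correct, but it follows a different route from the paper and runs under different hypotheses. Assumption \ref{ass2} contains no Donsker condition on $m$. It assumes that $m$ is differentiable in $(\theta,h)$, that the derivative classes $\{\partial m(\cdot;\theta,h)/\partial\theta^\top\}$ and $\{\partial m(\cdot;\theta_0,h)/\partial h^\top\}$ are Glivenko--Cantelli near $(\theta_0,h_0)$ with continuous means, and that $M_{\theta_0}$ is invertible. The consistency of $\hat\theta_n$ and $\hat\theta_{n,BB}$ that you list is also used by the paper, though only implicitly.

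The paper cites \citet{Newey/McFadden:1994} for (a). For (b) it proceeds in four steps. It makes mean-value expansions of $\sum_i w_{in}m(O_i;\hat\theta_{n,BB},\hat h)$, first in $\theta$ about $\theta_0$ and then in $h$ about $h_0$. It replaces the weighted derivative averages at the intermediate points by $M_{\theta_0}$ and $M_{h_0}$, using the exchangeable-bootstrap Glivenko--Cantelli property. This gives an unconditional expansion of $\sqrt n(\hat\theta_{n,BB}-\theta_0)$ that contains $M_{h_0}n^{-1/2}\sum_i\psi(O_i;h_0)$. Finally it subtracts (a), so the $\psi$-terms cancel.

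You instead subtract the two estimating equations directly. You control the fluctuation terms through asymptotic equicontinuity of $\mathbb{G}_n=\sqrt n(\mathbb{P}_n-P_O)$ and of $\sqrt n(\mathbb{P}_n^w-\mathbb{P}_n)$ over a Donsker class, and you need differentiability only of $(\theta,h)\mapsto P_O m(\cdot;\theta,h)$. This is the standard bootstrap Z-estimator argument (cf.\ \citet{Kosorok:2008}), and it is exactly the alternative the paper mentions in Section \ref{NoNuisanceParameter}.

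The trade-off is this. The paper's route needs only uniform laws of large numbers for the derivatives and a finite second moment of $m(O;\theta_0,h_0)$, but it requires $m$ itself to be smooth. Yours covers non-smooth estimating functions, at the price of a Donsker condition and the heavier bootstrap equicontinuity machinery. Neither set of hypotheses implies the other.

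One detail to make explicit: the remainder in your step (b) should read $o_{P_{OW}}(1+\sqrt n\Vert\hat\theta_{n,BB}-\hat\theta_n\Vert)$. Because differentiability holds only at $(\theta_0,h_0)$, the two $M_{h_0}(\hat h-h_0)$ terms cancel but their remainders do not. This is where $\sqrt n(\hat h-h_0)=O_{P_O}(1)$ and the $\sqrt n$-consistency of $\hat\theta_n$ from (a) are needed.
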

Introducing $\hat{\theta}_n$ enables us to express the $\sqrt{n} (\hat{\theta}_{n,BB}-\hat{\theta}_n )$ in terms of the empirical process $\sqrt{n}\left(\mathbb{P}_n^{w}-\mathbb{P}_n\right)$, and hence permits us to swiftly identify the asymptotic posterior distribution. In fact, this convenient expression leads us to the conclude that the conditional distribution of 
$\sqrt{n}(\hat{\theta}_{n,BB}-\hat{\theta}_n )$ given $ O_1,\ldots,O_n$
is asymptotically normal with zero mean and variance $\Sigma$. For large $n$, the posterior mean, for a given realization $O_1,\ldots,O_n$ is  approximately $\hat{\theta}_n$, which depends on $\hat{h}$. The asymptotic posterior variance $\Sigma$ does not depend on the way $h_0$ is estimated and does not automatically match the frequentist asymptotic variance $V$ of $\sqrt{n} (\hat{\theta}_n-\theta_0 )$. A Bayesian/frequentist duality similar to the one in Theorem \ref{thmLLB} does not necessarily hold. Of course, such duality holds if $h_0$ is assumed to be known. Although the expressions of the variances might look complex and comparing them may be difficult, referring to \citet{Henmi/Eguchi:2004} undoubtedly eases the task. To compare these two variances, we interpret $\Sigma$ as a frequentist variance. Specifically, $\Sigma$ is the frequentist asymptotic variance of the estimator of $\theta_0$ with a known nuisance parameter $h_0$, whereas $V$ is the asymptotic variance of the frequentist estimator of $\theta_0$ with an estimated $\hat{h}$ in a first step. \citet{Henmi/Eguchi:2004} allows us to infer that $\Sigma \geq V$ as long as  $\hat{h}$ is acquired using an estimating equation that does not depend on $\theta$. In this case, the conventional two-step procedure induces an asymptotically normal posterior distribution, with the correct mean but with a larger variance. Bayesian credible intervals will have higher than nominal coverage rate unless $h_0$ is assumed to be known. 
\subsection{Linked Bayesian Bootstrap: Bayesian and Frequentist Duality}{\label{LinkedBayesianBootstrap}}
We have seen that a two-step approach, the first of which consists of estimating $h_0$ by $\hat{h}$, and using $\hat{h}$, leads to an inflation of the posterior variance of the targeted parameter if the Bayesian analysis is to be carried out conditional of the value of $\hat{h}$. In the upcoming parts of the paper, we propose a solution to the problem encountered, and argue that such approaches are fully Bayesian. Since we are restricting ourselves to parametric estimation of $h_0$, it is reasonable to assume that $h_0$ satisfies 
$\mathbb{E}u(O;h_0)=0$,
for some function $u : \mathcal{O} \times \mathcal{H} \to \mathbb{R}^q$. In the illustrative example of subsection \ref{AnIllustrativeModel}, we can immediately see that $h_0$ is the solution to $\mathbb{E}u(O;h)=0$, where 
\[
u(O;h)=X^\top\left\{Z-e(X;h)\right\}.
\] 
In a frequentist framework, if $\hat{h}_n$ satisfies 
\[
\sum\limits_{i=1}^n u(O_i;\hat{h}_n)=0,
\]
then under regularity conditions on $u$, we obtain that $
\sqrt{n} (\hat{h}_n-h_0 )$ converges in distribution to a random variable with a $\mathcal{N}(0, \Omega)$ distribution, where 
\[
\Omega= U_{h_0}^{-1} \mathbb{E}_{P_0} \left\{u(O;h_0)u(O;h_0)^\top\right\} U_{h_0}^{-\top} \quad \text{ and } \quad U_{h_0} =\mathbb{E}_{P_O}\left\{\frac{\partial u(O;h_0)}{\partial {h} ^\top}\right\}.
\]
We then consider the frequentist estimator $\hat{\theta}_n$ that satisfies:
\[
\sum\limits_{i=1}^n m(O_i,\hat{\theta}_n,\hat{h}_n)=0.
\]
While the following statements concerning the asymptotic properties of $\hat{\theta}_n$ and $\hat{h}_n$ are not required to prove Theorem \ref{ThmLBBConsistency} , we include them here as they will be essential for our subsequent discussions. Using the results of \citet{Henmi/Eguchi:2004}, it transpires that $\sqrt{n}  (\hat{\theta}_n-\theta_0 )$ converges to a random variable with a $\mathcal{N}(0,V)$ distribution, where 
\[
V= \Sigma -M_{\theta_0}^{-1}M_{h_0}  \Omega M_{h_0}^\top M_{\theta_0}^{-\top},
\]
and that $\sqrt{n}  (\hat{h}_n-h_0 )$ and $\sqrt{n}  (\hat{\theta}_n-\theta_0 )$ are asymptotically independent. 

The Linked Bayesian Bootstrap relies on this simple frequentist mechanism to manipulate the uncertainty attached to the nuisance parameter in the Bayesian estimation of the targeted parameter. Certainly, the underlying motivation for deploying this algorithm is more profound and leverages concepts from causal inference, Bayesian non-parametrics, and decision-theory. \citet{Stephens/etal:2022} aims to justify the use of an adequately chosen plug-in estimate of the propensity score as a Bayesian method. In subsection \ref{ASimpleJustificationoffthisBayesianProcedure}, after stating  Theorem \ref{ThmLBBConsistency} and discussing its implications, we provide a motivation of this algorithm that circumvents the notion of a plug-in and relies exclusively on Bayesian non-parametrics and estimating equations. Before undertaking the theoretical statement, we show an instance of  the Linked Bayesian Bootstrap algorithm for the model in subsection \ref{AnIllustrativeModel}. 
   \begin{algo}{Linked Bayesian Bootstrap for the Partially Linear Model}{\label{StephensLinkedBayesianBootstrap}}
    \begin{enumerate}
        \item For each $j \in \left\{1,\ldots,B\right\}$,
        \begin{enumerate}
        \item  Draw random weights $(w^{(j)}_{1n},\ldots,w^{(j)}_{nn}) \sim \text{Dir}(1,\ldots,1)$.
        \item  Obtain $h^{(j)}$ as the solution to 
        \[
        \sum\limits_{i=1}^n w^{(j)}_{in} X_{i}^\top\left\{Z_i-e(X_i;h)\right\}=0
        \] 
        \item  Find $\theta^{(j)}$ as the solution to
               \[
               \sum\limits_{i=1}^n w^{(j)}_{in}\left\{Y_i-\theta Z_i\right\}\left\{Z_i-e(X_i;h^{(j)})\right\}=0 
               \]
        \end{enumerate}
        \item Output $\theta^{(1)},\ldots, \theta^{(B)}$. 
    \end{enumerate}
\end{algo} 
The uncertainty concerning $h^{(j)}$ is handled through the same Dirichlet weights used to obtain $\theta^{(j)}$, which generate a linkage between the nuisance parameter and the targeted parameter. We now state the following theorem, which assesses the asymptotic properties of the above algorithm.
\begin{theorem}{\label{ThmLBBConsistency}}
 Suppose that the following hold : 
\begin{enumerate}
    \item $\sum\limits_{i=1}^n u(O_i;\hat{h}_{n})=0$ and $\sum\limits_{i=1}^n  m(O_i;\hat{\theta}_{n},\hat{h}_n)=0$
    \item $\sum\limits_{i=1}^n w_{in}u(O_i;\hat{h}_{n})=0$ and  $\sum\limits_{i=1}^n w_{in}m(O_i,\hat{\theta}_{n,BB},\hat{h}_{n,BB})=0$.
  \end{enumerate}
 Then, under the regularity assumptions stated in Assumption \ref{ass3} , 
 \begin{enumerate}
\item[(a)]  \[
\sqrt{n}\left(\hat{\theta}_n-\theta_0\right)=\frac{-1}{\sqrt{n}}M_{\theta_0}^{-1}\sum\limits_{i=1}^n  \left\{m(O_i;\theta_0,h_0) -M_{h_0}U_{h_0}^{-1}u(O_i;h_0)\right\}+o_{P_O}(1),
\]
\item[(b)]
\begin{align*}
\sqrt{n}\left( \right.& \hat{\theta}_{n,BB}   -\left. \hat{\theta}_n\right) \\[6pt] 
& =-M_{\theta_0}^{-1}\sqrt{n}\sum\limits_{i=1}^n \left(w_{in}-\frac{1}{n}\right) \left\{m\left(O_i;{\theta}_0, h_0\right)-M_{h_0}U_{h_0}^{-1}u\left(O_i;h_0\right)\right\}+o_{P_{OW}}(1)
\end{align*}
\end{enumerate}
Those results together imply that 
\[
\sqrt{n}(\hat{\theta}_n-\theta_0 ) \qquad \text{ and } \qquad
\sqrt{n} (\hat{\theta}_{n,BB}-\hat{\theta}_n )\mid O_1,\ldots ,O_n
\]
converge in distribution to random variables with $\mathcal{N}(0,V)$ distributions (almost surely).
\end{theorem}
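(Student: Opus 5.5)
The idea is to treat $(\theta,h)$ as one stacked parameter solving the joint estimating equation
\[
g(O;\theta,h)=\begin{pmatrix} m(O;\theta,h)\\ u(O;h)\end{pmatrix}.
\]
Hypothesis 2 of the statement says exactly that $(\hat{\theta}_{n,BB},\hat{h}_{n,BB})$ solves $\sum_i w_{in}g(O_i;\theta,h)=0$; the display reads $\hat{h}_n$, which I take to be a typo for $\hat{h}_{n,BB}$. Likewise $(\hat{\theta}_n,\hat{h}_n)$ solves the unweighted version. Both parts then come from Theorem \ref{thmllbroot} applied to $g$, followed by inversion of a block-triangular Jacobian. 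To keep the argument self-contained, I would rederive the expansions with the same empirical-process tools used for Theorem \ref{ThmEstimatedNuisance}.

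\textbf{Step 1: consistency.} Under Assumption \ref{ass3} (Glivenko--Cantelli conditions on $\{u(\cdot;h)\}$ and $\{m(\cdot;\theta,h)\}$, and identifiability of the zero of $\mathbb{E}g$), first show $\hat{h}_n,\hat{h}_{n,BB}\to h_0$ and then $\hat{\theta}_n,\hat{\theta}_{n,BB}\to\theta_0$. For the weighted versions, use that $\sup_f|(\mathbb{P}_n^w-\mathbb{P}_n)f|\to0$ in outer probability for Glivenko--Cantelli classes with exchangeable Dirichlet weights \citep{VanderVaart/Wellner:1996}.

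\textbf{Step 2: linearisation.} Assuming $\{g(\cdot;\theta,h)\}$ is Donsker near $(\theta_0,h_0)$ and $L_2$-continuous there, stochastic equicontinuity gives, for the unweighted solution,
\[
0=\sqrt n\,\mathbb{P}_n g(\theta_0,h_0)+G\sqrt n\big((\hat{\theta}_n,\hat h_n)-(\theta_0,h_0)\big)+o_{P_O}(1),
\qquad
G=\begin{pmatrix}M_{\theta_0}&M_{h_0}\\0&U_{h_0}\end{pmatrix}.
\]
The same expansion holds with $\mathbb{P}_n^w$ and error $o_{P_{OW}}(1)$. This uses the conditional multiplier CLT for the exchangeable bootstrap: $\sqrt n(\mathbb{P}_n^w-\mathbb{P}_n)$ is asymptotically equicontinuous on the Donsker class, so evaluating at $(\hat{\theta}_{n,BB},\hat h_{n,BB})$ rather than at $(\theta_0,h_0)$ costs only $o_{P_{OW}}(1)$.

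\textbf{Step 3: block inversion.} Invert $G$ using
\[
G^{-1}=\begin{pmatrix}M_{\theta_0}^{-1}&-M_{\theta_0}^{-1}M_{h_0}U_{h_0}^{-1}\\0&U_{h_0}^{-1}\end{pmatrix}.
\]
Reading off the $\theta$-block of $-G^{-1}\sqrt n\,\mathbb{P}_n g(\theta_0,h_0)$ gives (a). Subtracting the unweighted expansion from the weighted one and reading off the $\theta$-block gives (b), since $\sqrt n(\mathbb{P}^w_n-\mathbb{P}_n)=\sqrt n\sum_i(w_{in}-1/n)\delta_{O_i}$.

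\textbf{Step 4: limits.} Write $\eta(O)=m(O;\theta_0,h_0)-M_{h_0}U_{h_0}^{-1}u(O;h_0)$. The ordinary CLT applied to (a) gives $\mathcal N(0,M_{\theta_0}^{-1}\mathrm{Var}(\eta)M_{\theta_0}^{-\top})$. For (b), write $w_{in}=E_i/\sum_j E_j$ with $E_i$ i.i.d.\ standard exponential. Then $\sqrt n\sum_i(w_{in}-1/n)\eta(O_i)=n^{-1/2}\sum_i(E_i-1)\{\eta(O_i)-\bar\eta\}/\bar E$, and the conditional Lindeberg CLT, given the data almost surely, yields the same normal limit because $\mathrm{Var}(E_i)=1$.

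It remains to check that $M_{\theta_0}^{-1}\mathrm{Var}(\eta)M_{\theta_0}^{-\top}$ equals $V=\Sigma-M_{\theta_0}^{-1}M_{h_0}\Omega M_{h_0}^\top M_{\theta_0}^{-\top}$. Expanding $\mathrm{Var}(\eta)$ produces the cross-covariance $\mathbb{E}[m u^\top]$, and the identity needs an information-type relation linking this to $M_{h_0}$. I would take $\mathbb{E}[m u^\top]=-M_{h_0}U_{h_0}^{-\top}\mathbb{E}[uu^\top]$, which is the Henmi--Eguchi condition. It holds when $u$ is the score of a correctly specified model for $Z\mid X$ and $m$ has a generalized information equality, as in the propensity-score examples. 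I would either cite \citet{Henmi/Eguchi:2004}, as the paper does, or add this relation to Assumption \ref{ass3}.

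\textbf{Main obstacle.} The delicate step is the weighted linearisation in Step 2, because it must hold in $P_{OW}$-probability at the random point $(\hat{\theta}_{n,BB},\hat h_{n,BB})$. It requires Donsker conditions, with the rate $\sqrt n$-consistency of the bootstrap solutions obtained first from the invertibility of $G$. Upgrading the conditional convergence to ``almost surely'' also needs the almost-sure form of the multiplier CLT, which requires $\mathbb{E}\|\eta\|^2<\infty$.
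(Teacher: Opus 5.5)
Your overall architecture is sound and takes a different route from the paper's proof. However, the last step, identifying the limiting variance with $V$, fails as you have written it.

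\textbf{The variance identity.} Put $K=M_{h_0}U_{h_0}^{-1}$, $C=\mathbb{E}_{P_O}\{m(O;\theta_0,h_0)u(O;h_0)^\top\}$ and $B=\mathbb{E}_{P_O}\{u(O;h_0)u(O;h_0)^\top\}$. For your $\eta(O)$ you get $\mathrm{Var}\{\eta(O)\}=\mathbb{E}(mm^\top)-CK^\top-KC^\top+KBK^\top$. Since $M_{h_0}\Omega M_{h_0}^\top=KBK^\top$, obtaining $V$ requires $CK^\top+KC^\top=2KBK^\top$. The natural sufficient condition is $C=KB=M_{h_0}U_{h_0}^{-1}B$.

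Your condition $C=-M_{h_0}U_{h_0}^{-\top}B$ has the opposite sign. For symmetric $U_{h_0}$ it gives $\mathrm{Var}\{\eta(O)\}=\mathbb{E}(mm^\top)+3KBK^\top$, i.e.\ a limit $\Sigma+3M_{\theta_0}^{-1}M_{h_0}\Omega M_{h_0}^\top M_{\theta_0}^{-\top}$ instead of $V$. It also contradicts your own justification. When $u$ is a correctly specified score, $U_{h_0}=-B$, and the generalized information equality gives $C=-M_{h_0}$, whereas your formula would force $C=+M_{h_0}$.

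The corrected condition $\mathbb{E}(mu^\top)=M_{h_0}U_{h_0}^{-1}\mathbb{E}(uu^\top)$ reduces, under $U_{h_0}=-B$, to $\mathbb{E}(mu^\top)=-M_{h_0}$. That is the Henmi--Eguchi relation the paper invokes only by citing \citet{Henmi/Eguchi:2004}. With it your Step 4 goes through, and making the identity explicit is worthwhile, since the paper never verifies it.

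\textbf{Comparison of routes.} The paper never stacks the equations in its proof.
\begin{itemize}
\item It linearizes $\hat h_n$ and $\hat h_{n,BB}$ by a mean-value expansion of $u$.
\item It obtains (a) by substituting $\psi=-U_{h_0}^{-1}u$ into Theorem \ref{ThmEstimatedNuisance}(a).
\item It obtains (b) by expanding $\sum_i w_{in}m(O_i;\hat\theta_{n,BB},\hat h_{n,BB})$ first in $\theta$, then in $h$ at $\theta_0$. The Glivenko--Cantelli conditions on the derivative classes in Assumption \ref{ass3} turn the weighted derivative averages into $M_{\theta_0}$, $M_{h_0}$ and $U_{h_0}$.
\item Consistency is simply assumed (items 6--7). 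So your Step 1 is unnecessary, and the conditions you attribute to Assumption \ref{ass3} there are not in it.
\end{itemize}
Your block inversion of $G$ is the same algebra done in one stroke. Your stacking is the one-step view of subsection \ref{ASimpleJustificationoffthisBayesianProcedure} promoted to a proof.

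What your route buys:
\begin{itemize}
\item It needs only differentiability of $(\theta,h)\mapsto\mathbb{E}_{P_O}g(O;\theta,h)$, not pointwise differentiability of $m$ and $u$.
\item It avoids the single-intermediate-point mean-value theorem, which the paper uses and which needs row-wise intermediate points when $m$ is vector-valued.
\item It makes the conditional CLT explicit; your exponential representation and Lindeberg argument are correct.
\end{itemize}
The price is Donsker and $L_2$-continuity conditions absent from Assumption \ref{ass3}, plus a separate $\sqrt n$-rate argument for the bootstrap root.

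\textbf{The ``almost surely'' claim.} Your Lindeberg argument already gives conditional almost-sure convergence of the leading term. What an $o_{P_{OW}}(1)$ remainder does not give, in your proof or in the paper's, is conditional negligibility for almost every data sequence. That requires outer-almost-sure versions of the bootstrap equicontinuity results, not just an almost-sure multiplier CLT.
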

This result justifies the desirable frequentist properties of the Linked Bayesian Bootstrap by establishing the advantageous Bayesian/frequentist duality.
We observe that the posterior variance of $\theta$ using the Linked Bayesian Bootstrap is not greater than than the variance of the posterior had $h_0$ been known throughout. 

\subsection{Approximation Accuracy}

It is worthwhile to study the standardized posterior distribution, that is the conditional distribution of
\[
\sqrt{n}V^{-0.5}( \hat{\theta}_{n,BB}-\hat{\theta}_n )
\]
given $O_1,\ldots,O_n$, as an approximation to the frequentist sampling distribution of the standardized quantity
\[
\sqrt{n} V^{-0.5} ( \hat{\theta}_{n}- {\theta}_0 ),
\]
through Edgeworth expansions. We suppose that both standardized Bayesian and Frequentist estimators can be well approximated by their respective influence functions, as explained by \citet{Newton/Raftery:1994} and \citet{Shao/Dongsheng:1995}.  Assuming that $\theta_0$ is a real-valued parameter, letting $\Phi$ and $\phi$ denote the standard normal distribution and density respectively, and following the same ideas as in \citet{Newton/Raftery:1994}, the frequentist distribution $F_n$, under regularity conditions found in \citet{Hall:1992} and \citet{Shao/Dongsheng:1995}  by 
\[
\sup\limits_{t \in \mathbb{R}} \left\vert F_{n}(t)  - \Phi(t) - \frac{1}{6 \sqrt{n}} \left[\frac{\mathbb{E}_{P_O}\left\{\lambda(O;\theta_0,h_0)^3\right\}}{\left[\mathbb{E}_{P_{O}}\left\{\lambda(O;\theta_0,h_0)^2\right\}\right]^{3/2}}\right](t^2-1)\phi(t)\right\vert =o (n^{-1/2} ),
\]
while the posterior distribution $F_{n,BB}$ satisfies that
\[
\sup\limits_{t \in \mathbb{R}} \left\vert F_{n,BB}\left(t\mid O_1,\ldots O_n\right)  - \Phi(t) - \frac{2}{6 \sqrt{n}} \left[\frac{\mathbb{E}_{P_O}\left\{\lambda(O;\theta_0,h_0)^3\right\}}{\left[\mathbb{E}_{P_{O}}\left\{\lambda(O;\theta_0,h_0)^2\right\}\right]^{3/2}}\right](t^2-1)\phi(t)\right\vert, 
\]
is $o (n^{-1/2} )$ almost surely  \citep[see][]{Haeusler/etal:1991},
where 
\[
\lambda(O;\theta_0,h_0)=m(O;\theta_0,h_0)-M_{h_0}U_{h_0}^{-1}u(O;h_0)
\]
This indicates that $F_{n,BB}$ approximates $F_n$ as well as $\Phi$ approximates $F_n$, up to second order, since by the triangle inequality, we have that
\[
\sup\limits_{t \in \mathbb{R}} \left\vert F_{n,BB}\left(t\mid O_1,\ldots O_n\right)  - F_n(t) - \frac{1}{6 \sqrt{n}} \left[\frac{\mathbb{E}_{P_O}\left\{\lambda(O;\theta_0,h_0)^3\right\}}{\left[\mathbb{E}_{P_{O}}\left\{\lambda(O;\theta_0,h_0)^2\right\}\right]^{3/2}}\right](t^2-1)\phi(t)\right\vert 
\]
is $o(n^{-1/2})$ almost surely. This approximation can be motivated by the discussion of the results of  \citet{Newton/Raftery:1994} by Albert Y. Lo, who introduced the coefficient of asymptotic accuracy, defined at every $t \in \mathbb{R}$ by 
\[
\lim\limits_{n \to \infty} \frac{\sqrt{n}\left\vert  F_{n}(t)-F_{n,BB}(t\mid O_1,\ldots O_n) \right\vert}{\sqrt{n} \left\vert F_n(t)-\Phi(t)\right\vert}.
\]
and argued that it is equal to $1$ (almost surely) for any $t \in \mathbb{R}$, which emphasizes that the posterior  seen as an approximation of $F_n$ ties with the normal approximation to $F_n$.

However, such approximations are not second order accurate, as the coefficients of $n^{-1/2}$ in the Edgeworth expansions of $F_{n,BB}$ and $F_{n}$ are not equal. We refer to  \citet{Shao/Dongsheng:1995} for a more rigorous treatment of Edgeworth expansions, including regularity assumptions needed for such approximations to be valid. 

\subsection{A One-Step Justification of the Linked Bayesian Bootstrap}{\label{ASimpleJustificationoffthisBayesianProcedure}}
The procedure described in subsection \ref{LinkedBayesianBootstrap} yields a posterior distribution for the targeted parameter with good asymptotic properties and credible intervals at the nominal coverage rate. It substantially differs from the method described in subsection \ref{Disadvantagesofnotadjustingfortheuncertainty} in which we do not account for the uncertainty in estimating the nuisance parameter. While justifications validating the use of a plug-in without propagating uncertainty as fully Bayesian methods may still be constructed and argued for, we opt not to consider such grounds due to the undesirable properties of the posterior as reflected by Theorem \ref{ThmEstimatedNuisance}. However, we aim to find a fully Bayesian interpretation of the Linked Bayesian Bootstrap discussed in subsection \ref{LinkedBayesianBootstrap} that is in alignment with subsection \ref{NoNuisanceParameter} and that bypasses the problematic notion of a plug-in estimator. This interpretation is general and leans on elementary ideas from estimating equations and Bayesian non-parametrics. 

Following the same notation as in the previous sections, let 
\[
a(O;\theta,h)=\begin{bmatrix}
   m(O;\theta,h)\\[3pt]
   u(O;h)
\end{bmatrix}
\in \mathbb{R}^{p+q}.
\]
This interpretation treats $h_0$ as a targeted parameter on an equal footing with $\theta_0$. We define $\eta \in \mathbb{R}^{p+q}$ by $\eta =(\theta^\top,h^\top)^\top$, and write $a(O;\eta)$ in lieu of $a(O;\theta,h)$. Note that the solution $\eta_0$ to the moment condition $\mathbb{E} a(O;\eta)=0$ is given by $\eta_0=(\theta_0^\top,h_0^\top)^\top$, and that estimator $\hat{\eta}_n$ satisfies
\[
\sum\limits_{i=1}^n a(O_i,\hat{\eta}_n)=0,
\]
is given by $\hat{\eta}_n=(\hat{\theta}_n^\top,\hat{h}_n^\top)^\top$, where $\hat{h}_n$ and $\hat{\theta}_n$ satisfy respectively the first assumption of Theorem \ref{ThmLBBConsistency}:
\[
\sum\limits_{i=1}^n u(O_i,\hat{h}_n)=0 \text{ and } \sum\limits_{i=1}^n m(O_i,\hat{\theta}_n,\hat{h}_n)=0.
\]
It then becomes clear that 
\[
\sqrt{n}\begin{pmatrix}
    \hat{\theta}_n-\theta_0\\
    \hat{h}_n-h_0
\end{pmatrix}
=\sqrt{n} \left(\hat{\eta}_n-\eta_0\right)
\to \mathcal{N}(0,\Lambda),
\]
where the $(p+q) \times (p+q)$ matrix $\Lambda$ is given by
\[
\Lambda=\left[\mathbb{E}_{P_0}\left\{\frac{\partial a(O;\eta_0)}{\partial \eta^\top }\right\}\right]^{-1}\left[\mathbb{E}_{P_0} \left\{a(O;\eta_0)a(O;\eta_0)^\top\right\}\right]\left[\mathbb{E}_{P_0}\left\{\frac{\partial a(O;\eta_0)}{\partial \eta^\top }\right\}\right]^{-\top}
\]
and can be shown, using the results \citet{Henmi/Eguchi:2004}, to satisfy
\[
\Lambda=\begin{pmatrix}
    V & 0_{p \times q}\\
    0_{q \times p} &\Omega 
\end{pmatrix}.
\]
Moving away from this frequentist framework, and by following the same reasoning as in subsection \ref{NoNuisanceParameter}, we can obtain a posterior for $\eta$ using the Bayesian bootstrap, by solving 
\[
\sum_{i=1}^n w_{in}a(O_i,\hat{\eta}_{n,BB})=0.
\]
It is immediately revealed that $\hat{\eta}_{n,BB}=({\hat{\theta}_{n,BB}}^\top,{\hat{h}_{n,BB}}^\top)^\top$, where  $\hat{h}_{n,BB}$ and $\hat{\theta}_{n,BB}$ satisfy respectively the second assumption of Theorem \ref{ThmLBBConsistency}:
\[
\sum\limits_{i=1}^n w_{in}u(O_i,\hat{h}_{n,BB})=0 \quad \text{ and } \quad \sum\limits_{i=1}^n w_{in} m(O_i,\hat{\theta}_{n,BB},\hat{h}_{n,BB})=0.
\]
We can hence conclude, using subsection \ref{NoNuisanceParameter}, that
\[
\sqrt{n}\begin{pmatrix}
    \hat{\theta}_{n,BB}-\hat{\theta}_n\\
    \hat{h}_{n,BB}-\hat{h}_n
\end{pmatrix} 
\mid O_1,\ldots,O_n
=\sqrt{n} \left(\hat{\eta}_{n,BB}-\hat{\eta}_n\right) \mid O_1,\ldots,O_n
\]
converges to a random variable with a $\mathcal{N} (0,\Lambda)$ distribution.  Thus $\theta$ and $h$ are \textit{a posteriori} asymptotically independent, and by taking the marginal asymptotic posterior of $\theta$, we recover the result of Theorem \ref{ThmLBBConsistency}. The proof (presented in the supplementary material) relies on the notion of a plugged-in nuisance parameter to infer about the marginal posterior distribution of $\theta$ without taking into account the joint posterior distribution. However, the above process first establishes the properties of the joint posterior distribution which are then used to obtain the marginal posterior distribution of the targeted parameter. The Linked Bayesian Bootstrap could hence be understood as a one-step procedure, yielding joint posterior inference concerning $h$ and $\theta$. The two-step approach is a simple computational strategy to solving the system of equations represented by the augmented function $a(O;\eta)$ and attaches little importance to the controversial notions of using a plug-in estimator and of the uncertainty propagation of the propensity model in the posterior inference for $\theta$. This view of the Linked Bayesian Bootstrap overcomes the difficulty in combining the two steps into one that jointly estimates propensity scores and causal effects without feedback; see \citet{Zigler:2016}, and results from a use of a non-parametric Bayes theorem. 

Formulating the problem in terms of estimating equations instead of two optimizations, as in Algorithm \ref{LBBOpt}, is of crucial importance. Indeed, the two optimizations may not be formulated as a single joint optimization problem for which the Loss-Likelihood Bootstrap can be applied. To illustrate this fact, let us assume for simplicity that both $\theta$ and $h$ are one-dimensional parameters and that a bivariate real-valued function $l(O;\theta,h)$ has its gradient equal to $\left[m\left(O;\theta,h\right),u\left(O;h\right)\right]$ or to $\left[u\left(O;h\right),m\left(O;\theta,h\right)\right]$. Then, using the fact that the gradient of a function is a conservative vector field, we obtain that
\[
\frac{\partial m\left(O;\theta,h\right)}{\partial \theta}=\frac{\partial u(O;h)}{ \partial h} \qquad\text{ or } \qquad \frac{\partial m\left(O;\theta,h\right)}{\partial h}=\frac{\partial u(O;h)}{ \partial \theta}=0. 
\] 
Both of these conditions can be shown to fail if we use, for example, the IPW estimator o subsection \ref{WeightingEstimators}.  The two estimating equations however, can be combined into a system which identifies the parameters $\theta_0$ and $h_0$.

\section{Simulations}{\label{Simulations}}
\subsection{G-Estimation}
We consider the model in subsection \ref{AnIllustrativeModel} and implement Algorithm \ref{StephensLinkedBayesianBootstrap}. The data generating mechanism is as follows: $X \in \mathbb{R}^6$ be distributed according to a multivariate Normal distribution with zero mean and covariance matrix $\Sigma_X$, whose $(i,j)$-th entry equals $0.8^{\vert i-j \vert}$. We then generate the binary treatment $Z$ and outcome $Y$ such that 
\begin{align}\label{gmod1}
    Y &= \theta_0 Z + g_0(X)  +U \\ 
    Z|X& \sim \text{Bern}(e_0(X)),
\end{align}
where $\theta_0=3$, $U \sim \mathcal{N}(0,1)$
\begin{align*}
    g_0(X)&=X_1+\exp(X_2-1)+\vert X_3 \vert + \exp(X_4-1)+\vert X_5 \vert +\vert X_5 X_6 \vert\\
    e_0(X)&= e(X;h_0)=\expit(h_0^\top X)
\end{align*}
with $h_0=(1,2,3,4,5,6)^\top/16$. 
We run a simulation across $1000$ replicates for different sample sizes, where at each iteration we have estimated the frequentist ATE using the moment restriction $\mathbb{E}m(O;\theta_0,h_0)=0$, where
\[
m(O;\theta,h)=(Y-\theta Z)\left\{Z-e(X;h)\right\}.
\]
We estimate $h_0$ by $\hat{h}$ obtained through logistic regression with intercept. For each replicate, we implement Algorithm \ref{StephensLinkedBayesianBootstrap} and derive the posterior distribution for the targeted parameter using 1000 Bayesian bootstrap samples. We report the results in Table \ref{ATErmnPost}, which underscores the Bayesian/frequentist duality proved in subsection \ref{LinkedBayesianBootstrap}. 
\begin{table}[h]
\centering
\begin{tabular}{@{}lrrrr}
\hline
& \multicolumn{3}{c}{$n$} \\[3pt]
& \multicolumn{1}{c}{$250$}
& \multicolumn{1}{c}{$500$} & \multicolumn{1}{c}{$1000$}
\\
\hline
Average of Posterior Means    & $3.01$ & $2.99$  & $3.00$  \\
Empirical Frequentist Mean  &   $3.01$      &  $2.99$        & $3.00$       \\
Average of Posterior Variances $(\times n)$   & $20.89$  & $21.44$  & $21.71$   \\
Empirical Frequentist Variance $(\times n$)&   $23.19$       & $22.27$ &     $21.57$             \\
Average Sandwich Estimate   & $21.99$  & $22.06$  & $21.98$ \\ 
\hline
Average Bayesian credible interval & $(2.45,3.57)$  &$(2.59,3.39)$ & $(2.71,3.29)$\\
Frequentist confidence interval &$(2.39,3.58)$ &$(2.58,3.42)$ &$(2.71,3.29)$\\
Posterior Coverage    & $92.6$  & $94.0$  & $94.2$ \\
\hline
\end{tabular}
\caption{Comparison of the frequentist and posterior distributions of the ATE estimated through the score $m$ and the Linked Bayesian Bootstrap. The true values of the ATE and the asymptotic variance of the sandwich variance are respectively $3$ and $22.32$\label{ATErmnPost}
}
\end{table}
 Before supplying other causal quantities of interest for which the 'Linked Bayesian Bootstrap' can be applied, we succinctly illustrate the impact of using a fixed estimate, without accounting for its uncertainty, of the propensity score in the Bayesian bootstrap procedure. The results are reported in Table \ref{ATErmnfixedh}. We see that for $n=1000$, the empirical posterior variances of the ATE are identical, as the theory suggests. As regards coverage, the coverage rate is at the nominal level only when $h_0$ is known. In the remaining cases, the posterior manifests coverage at a higher rate than the nominal level as implied by the analysis following Theorem \ref{ThmEstimatedNuisance}.
 
\begin{table}[t]
\centering
\begin{tabular}{@{}lrrrrrrr}
\hline
& \multicolumn{6}{c}{$n$} \\[3pt]
& \multicolumn{2}{c}{$250$}
& \multicolumn{2}{c}{$500$} & \multicolumn{2}{c}{$1000$}\\
\hline
 $h_0$   & 87.33 & (94.5) & 85.38 & (93.2)  & 83.39 & (94.5) \\
Logist. Regr. (with intercept) &  86.55 & (99.8)&  84.51 & (99.9)& 82.93 & (100.0)    \\
Logist. Regr. (no intercept)   & 89.57 & (96.4) & 86.61 & (96.0) & 83.58& (97.1)  \\
LLB (with intercept)&   88.46 & (99.8)      & 85.35& (100.0)&     83.24 & (100.0)          \\
LLB (no intercept)  & 91.32 & (96.5) & 87.41 & (95.9) & 83.94 & (97.0)\\ 
\hline
\end{tabular}
\caption{Averages of the empirical posterior variances (times $n$) of the treatment effect when a fixed plug-in estimator of $h_0$ is used, as in Algorithm \ref{algonuisancefixed}. In brackets are the posterior coverage rates based on the $2.5\%$ and $97.5\%$ percentiles. In the first row, the title $h_0$ indicates that the true value of $h_0$ is used, whereas 'Logist. Regr.' indicates that a frequentist logistic regression of $Z$ against $X$ is used. 'LLB' indicates that the loss-likelihood bootstrap is used, with an objective function equal to the log-likelihood. The posteriors were generated using $500$ Bayesian bootstrap samples and then averaged.\label{ATErmnfixedh}}
\end{table}

\subsection{Weighting Estimators}{\label{WeightingEstimators}}
We now perform a Bayesian analysis of the ATE estimated using the IPW estimator derived from the estimating equation \eqref{ipwee}. We generate confounders $X_1 \sim \mathcal{N}(1/2,1/{16})$ and $X_2 \sim \mathcal{N}(1,1/4)$, and let $X=(X_1,X_2)^\top$. We generate the binary treatment $Z$ and the outcome $Y$ according to the following structural model \eqref{gmod1}
where $\theta_0=3$, $U \sim \mathcal{N}(0,1)$, $g_0(X)=2X_1+X_2$, $e_0(X)= e(X;h_0)=\expit(h_0^\top X)$, and $h_0^\top=(1,1/2)$. As in the previous study, the experiment is run across 1000 replicates for three different sample sizes. At each iteration, we estimate the frequentist value of the ATE using the IPW estimator using an estimate of $h_0$ by a frequentist logistic regression. 
The IPW estimator does not depend on an outcome model but its inferential validity depends exclusively on the correct specification of the propensity score and on the fact that the propensity score does not take the limiting values $0$ and $1$.

We implement the Linked Bayesian Bootstrap algorithm at each iteration using $1000$ Bayesian bootstrap samples.The results are reported in Table \ref{ATEIPWPost}, which highlight the Bayesian/frequentist duality proved in Theorem \ref{ThmLBBConsistency}. In order to assess the behavior of the posterior distribution of the ATE estimated using the IPW in a conventional plug-in approach, in which a fixed estimate $h_0$ is used, we mirror the experiments run in the previous study. The results are reported in Table \ref{ATEIPWfixedh} and are in line with the findings of Theorem \ref{ThmEstimatedNuisance}. In particular, the posterior variance is not affected by the fixed estimate of $h_0$ and is higher than the one obtained under the Linked Bayesian Bootstrap. Moreover, coverage at the nominal level in these conventional two-step approaches is only achieved when $h_0$ is known, as predicted by the theory. Weighting estimators can also be used to estimate the ATT $\rho_0$, which satisfies  $\mathbb{E}r(O;\rho_0,h_0)=0$, where
\[
r(O;\rho,h)= ZY-\frac{Y(1-Z)e(X;h)}{1-e(X;h)}-\rho Z.
\] 
We carry-out a simulation study to perform a Bayesian analysis on this estimator. The results are reported in Table \ref{ATTPost} for the Linked Bayesian Bootstrap, and in Table \ref{ATTfixedh} for the conventional two-step approach. In this model, the ATT and ATE are equal.

\begin{table}[t]
\centering
\begin{tabular}{@{}lrrrr}
\hline
& \multicolumn{3}{c}{$n$} \\[3pt]
& \multicolumn{1}{c}{$250$}
& \multicolumn{1}{c}{$500$} & \multicolumn{1}{c}{$1000$}
\\
\hline
Average of Posterior Means    & $3.00$ & $3.00$  & $3.00$  \\
Empirical Frequentist Mean  &   $3.00$      &  $3.00$        & $3.00$       \\
Average of Posterior Variances $(\times n)$   & $6.46$  & $6.01$  & $5.81$   \\
Empirical Frequentist Variance $(\times n$)&   $6.04$       & $5.62$ &     $5.82$             \\
Average Sandwich Estimate   & $6.24$  & $5.97$  & $5.82$ \\ 
\hline
Average Bayesian credible interval & $(2.68,3.31)$  &$(2.78,3.21)$ & $(2.85,3.15)$\\
Frequentist confidence interval &$(2.70,3.30)$ &$(2.80,3.20)$ &$(2.86,3.15)$\\
Posterior Coverage    & $95.4$  & $95.3$  & $94.8$ \\
\hline
\end{tabular}
\caption{Comparison of the frequentist and posterior distributions of the ATE estimated using the IPW estimator and the Linked Bayesian Bootstrap respectively. The true values of the ATE and the asymptotic variance of the sandwich variance are respectively $3$ and $5.68$.\label{ATEIPWPost}}
\end{table}

\begin{table}[t]
\centering
\begin{tabular}{@{}lrrrrrrr}
\hline
& \multicolumn{6}{c}{$n$} \\[3pt]
& \multicolumn{2}{c}{$250$}
& \multicolumn{2}{c}{$500$} & \multicolumn{2}{c}{$1000$}\\
\hline
 $h_0$   & 51.04 & (95.5) & 51.06 & (94.3)  & 51.44 & (93.7) \\
Logist. Regr. (with intercept) &  52.34 & (100.0)&  51.71 & (100.0)& 51.75 & (100.0)    \\
Logist. Regr. (no intercept)   & 51.78 & (100.0) & 51.58 & (100.0) & 51.67& (100.0)  \\
LLB (with intercept)&   53.32 & (100.0)      & 52.15& (100.0)&     51.95 & (100.0)          \\
LLB (no intercept)  & 52.41 & (100.0) & 51.89 & (100.0) & 51.81 & (100.0)\\ 
\hline
\end{tabular}
\caption{Averages of the empirical posterior variances (times $n$) of the average treatment effect estimated using the IPW estimator when a fixed plug-in estimator of $h_0$ is used. In brackets are the posterior coverage rates based on the $2.5\%$ and $97.5\%$ percentiles. The methods used for obtaining the estimate $\hat{h}$ of $h_0$ are the same as those in Table \ref{ATErmnfixedh}. In the first row, the title $h_0$ indicates that the true value of $h_0$ is used, whereas 'Logist. Regr.' indicates that a frequentist logistic regression of $Z$ against $X$ is used. 'LLB' indicates that the loss-likelihood bootstrap is used, with an objective function equal to the log-likelihood. The posteriors were generated using $500$ Bayesian bootstrap samples and then averaged. \label{ATEIPWfixedh}}
\end{table}

\begin{table}[h]
\centering
\begin{tabular}{@{}lrrrr}
\hline
& \multicolumn{3}{c}{$n$} \\[3pt]
& \multicolumn{1}{c}{$250$}
& \multicolumn{1}{c}{$500$} & \multicolumn{1}{c}{$1000$}
\\
\hline
Average of Posterior Means    & $3.00$ & $3.00$  & $3.00$  \\
Empirical Frequentist Mean  &   $3.00$      &  $3.00$        & $3.00$       \\
Average of Posterior Variances $(\times n)$   & $7.55$  & $6.75$  & $6.51$   \\
Empirical Frequentist Variance $(\times n$)&   $7.02$       & $6.36$ &     $6.49$             \\
Average Sandwich Estimate   & $7.19$  & $6.65$  & $6.49$ \\ 
\hline
Average Bayesian credible interval & $(2.66,3.32)$  &$(2.76,3.22)$ & $(2.84,3.16)$\\
Frequentist confidence interval &$(2.68,3.31)$ &$(2.77,3.22)$ &$(2.83,3.15)$\\
Posterior Coverage    & $95.4$  & $95.1$  & $95.0$ \\
\hline
\end{tabular}
\caption{Comparison of the frequentist and posterior distributions of the ATT, estimated using the score $r$ under the Linked Bayesian Bootstrap. The true values of the ATT and the asymptotic variance of the sandwich variance are respectively $3$ and $6.18$.\label{ATTPost}}
\end{table}

\begin{table}[t]
\centering
\begin{tabular}{@{}lrrrrrrr}
\hline
& \multicolumn{6}{c}{$n$} \\[3pt]
& \multicolumn{2}{c}{$250$}
& \multicolumn{2}{c}{$500$} & \multicolumn{2}{c}{$1000$}\\
\hline
 $h_0$   & 37.82 & (94.7) & 37.24 & (95.1)  & 36.31 & (95.2) \\
Logist. Regr. (with intercept) &  38.55 & (100.0)&  37.20 & (100.0)& 36.54 & (100.0)    \\
Logist. Regr. (no intercept)   & 37.78 & (100.0) & 37.09 & (100.0) & 36.50& (100.0)  \\
LLB (with intercept)&   40.25 & (100.0)      & 37.96& (100.0)&     36.90 & (100.0)          \\
LLB (no intercept)  & 38.87 & (100.0) & 37.63 & (100.0) & 36.75 & (100.0)\\ 
\hline
\end{tabular}
\caption{Averages of the empirical posterior variances (times $n$) of the average treatment effect on the treated when a fixed plug-in estimator of $h_0$ is used. In brackets are the posterior coverage rates based on the $2.5\%$ and $97.5\%$ percentiles. In the first row, the title $h_0$ indicates that the true value of $h_0$ is used, whereas 'Logist. Regr.' indicates that a frequentist logistic regression of $Z$ against $X$ is used. 'LLB' indicates that the loss-likelihood bootstrap is used, with an objective function equal to the log-likelihood. The posteriors were generated using $500$ Bayesian bootstrap samples and then averaged.\label{ATTfixedh}}
\end{table}

\section{Discussion}

We have derived the properties of Bayesian posterior distributions defined by targeting parameters using estimating equations and computed using the Bayesian bootstrap.  To do this, we have exploited Bayesian/frequentist duality results to justify the use of specific computational strategies such as the Linked Bayesian Bootstrap.  

In the causal setting, when the form of the outcome model is unknown, adjustments based on the propensity score are valuable and can lead to consistent estimation of the causal quantities of interest as suggested by the theory of estimating equations. Certainly, we are limited to the knowledge of the treatment allocation process up to parametric form. These assumptions have lied at the heart of the frequentist parametric modeling in causal inference and have led to developments in singly robust models, in which the functional form of the outcome model is not known. In the Bayesian framework, a joint estimation of the propensity score and the causal quantity of interest can lead to an unwanted behavior of the Bayesian estimates when the outcome model is mis-specified. Such feedback, from the mis-specified outcome model to the propensity score model, halts the propensity score from playing its balancing role and results in biased estimation of the treatment effect. Remedies for the issue of feedback have been proposed such as the cutting feedback and two-step approaches that we have recapped briefly in subsection \ref{ApproachestoBCI}. The 'Linked Bayesian Bootstrap' provided another view of two-step methods using the Bayesian bootstrap approximation to the Dirichlet process posterior. The term 'Linked' is to emphasize that the uncertainty attached to the propensity score should be propagated using the same Dirichlet weights that are used to obtain Bayesian estimates of the treatment effect. In this paper, we have studied the Linked Bayesian Bootstrap in detail, establishing its frequentist desirable properties including the Bayesian/frequentist duality it satisfies. This particular result is an analogue of the parametric Bernstein-von Mises theorem and enables us to obtain Bayesian credible intervals at the correct nominal levels. We have additionally interpreted the Linked Bayesian Bootstrap as a fully Bayesian one-step procedure leading to a joint posterior inference about the treatment and the propensity score, with no feedback. This interpretation, derived in subsection \ref{ASimpleJustificationoffthisBayesianProcedure}, avoided the problematic of propagating the uncertainty of the propensity score model and the problematic notion of a plug-in and treated both parameters as parameters of interest. Finally, we are conscious of the modeling restrictions the parametric assumptions impose. Certainly, we are aware that highly flexible methods can be deployed to estimate causal effects, as shown in \citet{Antonelli/etal:2020}, \citet{Hahn/etal:2020}, and \citet{Vegetabile/etal:2020}. Such methods are not the focus of this paper and constitute the topic of ongoing work.


\textbf{Acknowledgements:} The authors acknowledge the support of the Natural Sciences and Engineering Research Council of Canada (NSERC) and the  Institut des Sciences Math\'ematiques (ISM).



\vspace{0.4in}

\appendix

\begin{center}
\textsc{Supplementary Material}    
\end{center}
\begin{assumption}[Technical Assumptions]{\label{ass0}}
The following assumptions \citep[see][]{Cheng/Huang:2010} specify the relevant probability spaces required in the theoretical statements.  We assume
\begin{enumerate}
    \item The data are realizations from the from probability space $(\mathcal{O},B,P_O)$.
    \item The Bayesian bootstrap weights are defined on probability space $(\mathcal{W},\mathcal{C},P_W)$.
    \item Observation $O_i$ is the $i$-th coordinate of the canonical projection from $(\mathcal{O}^\infty,B^\infty,P_O ^\infty)$.
    \item For the joint randomness, coming from the observed data and from the weights, is defined on the product probability space
    \[
    (\mathcal{O}^\infty,B^\infty,P_O ^\infty) \times (\mathcal{W},\mathcal{C},P_W)=(\mathcal{O}^\infty \times \mathcal{W},B^\infty \times \mathcal{C},P_O ^\infty \times P_W).
    \]
    The joint probability measure is hence $P_{OW}=P_{O}^\infty \times P_W$. We write $P_O$ in lieu of $P_O^{\infty}$ for simplicity when necessary. 
\end{enumerate}
    
\end{assumption}
\begin{assumption}[Assumptions for Theorem \ref{ThmEstimatedNuisance}]{\label{ass2}}
The following assumptions are not difficult to verify in practice. They are usually verified by imposing conditions on $m$; see \citet{Tsiatis:2006}. Those assumptions are to ensure 
\[
\frac{1}{n}\sum\limits_{i=1}^n \frac{\partial m(O_i;\tilde{\theta},\tilde{h})}{\partial \theta ^\top} \text{ and } \sum\limits_{i=1}^n w_{in} \frac{\partial m(O_i;\tilde{\theta},\tilde{h})}{\partial \theta ^\top}
\]
converge in probability to 
\[
{M_{\theta_0}}=\mathbb{E}_{P_O} \left\{ \left. \frac{\partial m\left(O;\theta,h_0 \right)}{\partial {\theta}^\top} \right |_{\theta = \theta_0} \right\},
\] 
provided that  $\tilde{\theta}$ and $\tilde{h}$ are consistent estimators of $\theta_0$ and $h_0$ respectively. We refer to \citet{Praestgaard/Wellner:1993} as well as \citet{VanderVaart/Wellner:1996} for an in-depth treatment concerning the exchangeable bootstrap for Glivenko-Cantelli and Donsker classes.
    \begin{enumerate}
        \item The class 
        $
        \displaystyle{\left\{\frac{\partial m(O;\theta,{h})}{\partial \theta ^\top}, \Vert \theta -\theta_0\Vert_{p,2}<\delta_1, \Vert h -h_0 \Vert_{q,2} <\delta_2 \right\}}
        $
        is $P_O$-Glivenko-Cantelli for some $\delta_1,\delta_2>0$, and the function
        $(\theta,h) \displaystyle{\mapsto\mathbb{E}_{P_O} \left\{\frac{\partial m(O;\theta,h)}{\partial \theta}\right\}}$
        is continuous.
        
        \item  The class is $\displaystyle{\left\{\frac{\partial m(O;\theta_0,{h})}{\partial h ^\top}, \Vert h -h_0 \Vert_{q,2} <\delta \right\}}$ is $P_O$- Glivenko-Cantelli for some $\delta>0$,  and the function
        $
        h \mapsto\displaystyle{\mathbb{E}_{P_O} \left\{\frac{\partial m(O;\theta_0,h)}{\partial h}\right\}}
        $
        is continuous.
        
        \item  $\displaystyle{M_{\theta_0}}$ is an invertible $p \times p$ matrix.
    \end{enumerate}
\end{assumption}
\begin{proof}[Proof of Theorem \ref{ThmEstimatedNuisance}]{\label{Proof1}}
The main idea behind the proof is to establish the unconditional properties of $\sqrt{n} (\hat{\theta}_{n}-\theta_0 )$ and $\sqrt{n} (\hat{\theta}_{n,BB}-{\theta}_0 )$. We then subtract these two expressions to obtain an expression of $\sqrt{n} (\hat{\theta}_{n,BB}-\hat{\theta}_n )$ in terms of the empirical process $\sqrt{n} \left(\mathbb{P}_n^{w}-\mathbb{P}_n \right)$ to assess its conditional and unconditional behavior.
   The proof of the fact that
   \[
\sqrt{n}\left(\hat{\theta}_n-\theta_0\right)=\frac{-1}{\sqrt{n}}M_{\theta_0}^{-1}\sum\limits_{i=1}^n  \left\{m(O_i;\theta_0,h_0) + M_{h_0}\psi(O_i;h_0)\right\}+o_{P_O}(1),
\]
   can be found in \citet{Newey/McFadden:1994}.
   In order to prove the second the statement, we first show that 
   \[
     \sqrt{n}\left(\hat{\theta}_{n,BB}-\theta_0\right)=-M_{\theta_0}^{-1}\left\{\sqrt{n}\sum\limits_{i=1}^n w_{in} m\left(O_i;{\theta}_0, h_0\right)+\frac{M_{h_0}}{\sqrt n} \sum\limits_{i=1}^n \psi(O_i;h_0)\right\}+o_{P_{OW}}(1)
   \]
   We first start by expanding the  term below around $\theta_0$:
\[
\sum\limits_{i=1}^n w_{in} m\left(O_i;\hat{\theta}_{n,BB}, \hat{h}\right)=
\sum\limits_{i=1}^n w_{in} m\left(O_i;{\theta}_0, \hat{h}\right) + \left\{\sum\limits_{i=1}^n w_{in}\frac{\partial m(O_i;\tilde{\theta},\hat{h})}{\partial \theta ^\top} \right\}\left(\hat{\theta}_{n,BB}-\theta_0\right),
\]
for some $\tilde{\theta}$ on the line segment joining $\hat{\theta}_{n,BB}$ to $\theta_0$.
We now examine 
\begin{align*}
&\sqrt{n}\sum\limits_{i=1}^n w_{in} m\left(O_i;{\theta}_0, \hat{h}\right)\\
&= \sqrt{n}\sum\limits_{i=1}^n w_{in} m\left(O_i;{\theta}_0, h_0\right)+\sqrt{n}\left\{\sum\limits_{i=1}^n w_{in} \frac{\partial m(O_i;\theta_0,\tilde{h})}{\partial h ^\top} \right\}\left(\hat{h}-h_0\right)\\
&=\sqrt{n}\sum\limits_{i=1}^n w_{in} m\left(O_i;{\theta}_0, h_0\right)+\left\{\sum\limits_{i=1}^n w_{in} \frac{\partial m(O_i;\theta_0,\tilde{h})}{\partial h ^\top} \right\}\left\{\frac{1}{\sqrt n} \sum\limits_{i=1}^n \psi\left(O_i;h_0\right) +o_{P_{O}}(1) \right\}
\end{align*}
   for some $\tilde{h}$ on the line segment joining $\hat{h}$ to $h_0$. 
Using Assumption \ref{ass2}, we have that 
\[
\sum\limits_{i=1}^n w_{in} \frac{\partial m(O_i;\theta_0,\tilde{h})}{{\partial h ^\top}} = M_{h_0}+o_{P_{OW}}(1) 
\hspace{5mm}
\text{and}
\hspace{5mm}
\sum\limits_{i=1}^n w_{in}\frac{\partial m(O_i;\tilde{\theta},\hat{h})}{\partial \theta ^\top}  = M_{\theta_0}+o_{P_{OW}}(1).
\]
Moreover, since 
\[
\frac{1}{\sqrt n} \sum\limits_{i=1}^n \psi(O_i;h_0)=O_{P_{OW}}(1),
\]
by the central limit theorem together with the fact that it does not depend on the bootstrap weights,
we get that
\[
\sqrt{n}\sum\limits_{i=1}^n w_{in} m\left(O_i;{\theta}_0, \hat{h}\right)=\sqrt{n}\sum\limits_{i=1}^n w_{in} m\left(O_i;{\theta}_0, h_0\right)+M_{h_0}\left\{\frac{1}{\sqrt n} \sum\limits_{i=1}^n \psi(O_i;h_0)\right\}+o_{P_{OW}}(1).
\]
In addition, 
\[
\sqrt{n}\sum\limits_{i=1}^n w_{in} m\left(O_i;{\theta}_0, h_0\right)=O_{P_{OW}}(1).
\]
Combining all the results together, we get that
\begin{align*}
&\sqrt{n}\left(\hat{\theta}_{n,BB}-\theta_0\right)\\
&=-\left\{M_{\theta_0}^{-1} +o_{P_{OW}}(1)\right\} \left\{o_{P_{OW}}(1)+\sqrt{n}\sum\limits_{i=1}^n w_{in} m\left(O_i;{\theta}_0, h_0\right)
+\frac{M_{h_0}}{\sqrt n} \sum\limits_{i=1}^n \psi(O_i;h_0)+o_{P_{OW}}(1)\right\}\\
&=-M_{\theta_0}^{-1}\left\{\sqrt{n}\sum\limits_{i=1}^n w_{in} m\left(O_i;{\theta}_0, h_0\right)+\frac{M_{h_0}}{\sqrt n} \sum\limits_{i=1}^n \psi(O_i;h_0)\right\}+o_{P_{OW}}(1).
\end{align*}
By writing $\sqrt{n}\left(\hat{\theta}_{n,BB}-\hat{\theta}_n\right)=\sqrt{n}\left(\hat{\theta}_{n,BB}-{\theta}_0\right)-\sqrt{n}\left(\hat{\theta}_n-\theta_0\right)$, we obtain
\[
\sqrt{n}\left(\hat{\theta}_{n,BB}-\hat{\theta}_n\right)=-\sqrt{n}M_{\theta_0}^{-1}\sum\limits_{i=1}^n \left(w_{in}-\frac{1}{n}\right)m(O_i;\theta_0,h_0) + o_{P_{OW}}(1).
\]
\end{proof}
\begin{assumption}[Assumption for Theorem \ref{ThmLBBConsistency}]{\label{ass3}} The following assumptions are similar to those in Assumption \ref{ass2}.
\begin{enumerate}
    \item The class 
    $
    \displaystyle{\left\{\frac{\partial m(O;\theta,{h})}{\partial \theta ^\top}, \Vert \theta -\theta_0\Vert_{p,2}<\delta_1, \Vert h -h_0 \Vert_{q,2} <\delta_2 \right\}}
    $
    is $P_O$-Glivenko-Cantelli for some $\delta_1,\delta_2>0$, and the function
        $
        (\theta,h) \mapsto\displaystyle{\mathbb{E}_{P_O} \left\{\frac{\partial m(O;\theta,h)}{\partial \theta}\right\}}
        $
        is continuous.
    \item  The class is $\displaystyle{\left\{\frac{\partial m(O;\theta_0,{h})}{\partial h ^\top}, \Vert h -h_0 \Vert_{q,2} <\delta \right\}}$ is $P_O$- Glivenko-Cantelli for some $\delta>0$, and the function
        $
        h \mapsto\displaystyle{\mathbb{E}_{P_O} \left\{\frac{\partial m(O;\theta_0,h)}{\partial h}\right\}}
        $
        is continuous.
        
    \item The class 
    $
    \displaystyle{\left\{\frac{\partial u(O;{h})}{\partial h ^\top}, \Vert h -h_0 \Vert_{q,2} <\delta \right\}}
    $
    is $P_O$- Glivenko-Cantelli for some $\delta>0$, and the function
        $
        h \mapsto\displaystyle{\mathbb{E}_{P_O} \left\{\frac{\partial u(O;h)}{\partial h}\right\}}
        $
        is continuous.
        
    \item  $\displaystyle{M_{\theta_0}}$ is an invertible $p \times p$ matrix.
    \item  $\displaystyle{U_{h_0}}$ is an invertible $q \times q$ matrix.
    \item $\hat{h}_n$ and $\hat{h}_{n,BB}$ are unconditionally consistent estimators of $h_0$.
    \item $\hat{\theta}_n$ and $\hat{\theta}_{n,BB}$ are unconditionally consistent estimators of $\theta_0$.
\end{enumerate}
\end{assumption}

\begin{proof}[Proof of Theorem \ref{ThmLBBConsistency}]{\label{ProofLBB}}
First, we note that
\[
\sqrt{n} \left(\hat{h}_{n}-h_0\right) =-U_{h_0}^{-1} \sqrt{n}\sum\limits_{i=1}^n u\left(O_i;h_0\right)+o_{P_{O}}(1).
\]
By substituting the above expression into the first part of Theorem \ref{ThmEstimatedNuisance},
we obtain that
\[
\sqrt{n}\left(\hat{\theta}_n-\theta_0\right)=\frac{-1}{\sqrt{n}}M_{\theta_0}^{-1}\sum\limits_{i=1}^n  \left\{m(O_i;\theta_0,h_0) -M_{h_0}U_{h_0}^{-1}u(O_i;h_0)\right\}+o_{P_O}(1).
\]
We now proceed in a similar manner as in the Proof \ref{Proof1}, to study $\sqrt{n}\left(\hat{\theta}_{n,BB}-\theta_0\right)$. We observe that
\[
\sqrt{n} \left(\hat{h}_{n,BB}-h_0\right) =-U_{h_0}^{-1} \sqrt{n}\sum\limits_{i=1}^n w_{in}u\left(O_i;h_0\right)+o_{P_{OW}}(1).
\]
By a similar expansion, we hence obtain 
\begin{align*}
\sum\limits_{i=1}^n w_{in}m\left(O_i;\hat{\theta}_{n,BB},\hat{h}_{n,BB}\right)  &=\sum\limits_{i=1}^n w_{in}m\left(O_i;\theta_0,\hat{h}_{n,BB}\right) \\[5pt]
& \quad +\left\{\sum\limits_{i=1}^n w_{in}\frac{\partial m\left(O_i;\tilde{\theta},\hat{h}_{n,BB}\right)}{\partial \theta ^\top}\right\} \left(\hat{\theta}_{n,BB}-\theta_0 \right)
\end{align*}
\begin{align*}
\sum\limits_{i=1}^n w_{in}m\left(O_i;\theta_0,\hat{h}_{n,BB}\right)&=\sum\limits_{i=1}^n w_{in}m\left(O_i;\theta_0,h_0\right) +\left\{\sum\limits_{i=1}^n w_{in} \frac{\partial m(O_i;\theta_0,\tilde{h})}{\partial h ^\top}\right\} \left(\hat{h}_{n,BB}-h_0 \right)\\
&=\sum\limits_{i=1}^n w_{in}m(O_i;\theta_0,h_0)-M_{h_0}U_{h_0}^{-1}  \sum\limits_{i=1}^n w_{in}u(O_i;h_0) +o_{P_{OW}}(1)
\end{align*}
and therefore,
\begin{align*}
    \sqrt{n}\left(\hat{\theta}_{n,BB}-\theta_0 \right)
    &=-M_{\theta_0}^{-1}\sqrt{n} \left\{\sum\limits_{i=1}^n w_{in}m(O_i;\theta_0,h_0)
-M_{h_0}U_{h_0}^{-1}\sum\limits_{i=1}^n w_{in}u(O_i;h_0)\right\}+o_{P_{OW}}(1).
\end{align*}
By writing $\sqrt{n}\left(\hat{\theta}_{n,BB}-\hat{\theta}_n\right)=\sqrt{n}\left(\hat{\theta}_{n,BB}-{\theta}_0\right)-\sqrt{n}\left(\hat{\theta}_n-\theta_0\right)$, we obtain
\[
\sqrt{n}\left(\hat{\theta}_{n,BB}-\hat{\theta}_n\right)=-M_{\theta_0}^{-1}\sqrt{n}\sum\limits_{i=1}^n \left(w_{in}-\frac{1}{n}\right)\left\{m\left(O_i;{\theta}_0, h_0\right)-M_{h_0}U_{h_0}^{-1}u\left(O_i;h_0\right)\right\}+o_{P_{OW}}(1)
\]
\end{proof}

\bibliographystyle{ba}
\bibliography{references}

\end{document}